\newif\ifdraft  \draftfalse  %\drafttrue
\newif\ifpdflatex  \pdflatexfalse 
\setlist{nolistsep}
\let\optionkeymacros\null
\title{Rhythmic generation %\\Infinite
       of  trees and languages}
\author
{%
        Victor Marsault\thanks{Corresponding author, victor.marsault@telecom-paristech.fr}\,
        \thanks{Telecom-ParisTech and CNRS, 46 rue Barrault, 75013 Paris, France}~
        \and 
        Jacques Sakarovitch\footnotemark[\value{footnote}]
}
\date{2014\,--\,05\,--\,05}
\newcommandx{\newtheoremy}[3][2={}]{
  \ifthenelse{\equal{#2}{}}{
    \ifcsmacro{#1}{}{\newtheorem{#1}{#3}}
  }{
    \ifcsmacro{#1}{}{\newtheorem{#1}[#2]{#3}}
  }
}
\newcommand{\thmBlockFont}[1]{#1}
\newcommand{\ldefinition}[1]{\label{d.#1}}
\newcommand{\lexample}[1]{\label{e.#1}}
\newcommand{\llemma}[1]{\label{l.#1}}
\newcommand{\lproposition}[1]{\label{p.#1}}
\newcommand{\lsection}[1]{\label{s.#1}}
\newcommand{\lfigure}[1]{\label{f.#1}}
\newcommand{\ltheorem}[1]{\label{t.#1}}
\newcommand{\lequation}[1]{\label{eq.#1}}
\newcommand{\generalref}[2]{%
  \ifthenelse{\equal{#1}{eq}}%
  {(\ref{#1.#2})}%
  {\ref{#1.#2}}%
}
\newcommand{\generalpageref}[2]{\pageref{#1.#2}}
\newcommand*{\ralgorithm}{\@ifstar{\generalref{a}}{Algorithm~\ralgorithm*}}
\newcommand*{\palgorithm}{\@ifstar{\generalpageref{a}}{page~\palgorithm*}}
\newcommand*{\rcorollary}{\@ifstar{\generalref{c}}{Corollary~\rcorollary*}}
\newcommand*{\pcorollary}{\@ifstar{\generalpageref{c}}{page~\pcorollary*}}
\newcommand*{\rconjecture}{\@ifstar{\generalref{cj}}{Conjecture~\rconjecture*}}
\newcommand*{\pconjecture}{\@ifstar{\generalpageref{cj}}{page~\pconjecture*}}
\newcommand*{\rdefinition}{\@ifstar{\generalref{d}}{Definition~\rdefinition*}}
\newcommand*{\pdefinition}{\@ifstar{\generalpageref{d}}{page~\pdefinition*}}
\newcommand*{\rexample}{\@ifstar{\generalref{e}}{Example~\rexample*}}
\newcommand*{\pexample}{\@ifstar{\generalpageref{e}}{page~\pexample*}}
\newcommand*{\rlemma}{\@ifstar{\generalref{l}}{Lemma~\rlemma*}}
\newcommand*{\plemma}{\@ifstar{\generalpageref{l}}{page~\plemma*}}
\newcommand*{\rproposition}{\@ifstar{\generalref{p}}{Proposition~\rproposition*}}
\newcommand*{\pproposition}{\@ifstar{\generalpageref{p}}{page~\pproposition*}}
\newcommand*{\rproperty}{\@ifstar{\generalref{pp}}{Proposition~\rproperty*}}
\newcommand*{\pproperty}{\@ifstar{\generalpageref{pp}}{page~\pproperty*}}
\newcommand*{\rprocedure}{\@ifstar{\generalref{pc}}{Procedure~\rprocedure*}}
\newcommand*{\pprocedure}{\@ifstar{\generalpageref{pc}}{page~\pprocedure*}}
\newcommand*{\rremark}{\@ifstar{\generalref{r}}{Remark~\rremark*}}
\newcommand*{\premark}{\@ifstar{\generalpageref{r}}{page~\premark*}}
\newcommand*{\rnotation}{\@ifstar{\generalref{n}}{Notation~\rnotation*}}
\newcommand*{\pnotation}{\@ifstar{\generalpageref{n}}{page~\pnotation*}}
\newcommand*{\rsection}{\@ifstar{\generalref{s}}{Section~\rsection*}}
\newcommand*{\psection}{\@ifstar{\generalpageref{s}}{page~\psection*}}
\newcommand*{\rtable}{\@ifstar{\generalref{t}}{Table~\rtable*}}
\newcommand*{\ptable}{\@ifstar{\generalpageref{t}}{page~\ptable*}}
\newcommand*{\rfigure}{\@ifstar{\generalref{f}}{Figure~\rfigure*}}
\newcommand*{\pfigure}{\@ifstar{\generalpageref{f}}{page~\pfigure*}}
\newcommand*{\requation}{\@ifstar{\generalref{eq}}{Equation~\requation*}}
\newcommand*{\pequation}{\@ifstar{\generalpageref{eq}}{page~\pequation*}}
\newcommand*{\rtheorem}{\@ifstar{\generalref{t}}{Theorem~\rtheorem*}}
\newcommand*{\ptheorem}{\@ifstar{\generalpageref{t}}{page~\ptheorem*}}
\def\Vhrulefill{\leavevmode\leaders\hrule height 0.7ex depth \dimexpr0.4pt-0.7ex\hfill\kern0pt}
\newcommandx{\wlen}[1]{|#1|}
\newcommandx{\cod}[1]{\langle #1 \rangle}
\newcommandx{\floor}[1]{\lfloor #1 \rfloor}
\newcommandx{\bfloor}[1]{\left\lfloor #1 \right\rfloor}
\newcommandx{\bceil}[1]{\left\lceil #1 \right\rceil}
\newcommandx{\ceil}[1]{\lceil #1 \rceil}
\newcommandx{\newcommandy}[5][1=i,3=0,4={}]{%
% #1 : i = ignore, o = owerwrite
  \ifthenelse{\isundefined{#2}}{\newcommandx{#2}[#3][#4]{#5}}{%
      \ifthenelse{\equal{#1}{i}}{}{}%
      \ifthenelse{\equal{#1}{o}}{\renewcommandx{#2}[#3][#4]{#5}}{}%
    }%
}
\newcommand{\val}[1]{\widebar{#1}}
\newcommand{\strong}[1]{\textbf{#1}}
\newcommand{\ssc}[1]{\textbf{\textsc{#1}}}
\newcommand{\pref}[1]{\text{Pref}\,(#1)}
\newcommand{\set}[1]{\{#1\}}
\newcommand{\Z}{\mathbb{Z}}
\newcommand{\N}{\mathbb{N}}
\newcommand{\Q}{\mathbb{Q}}
\newcommand{\widebar}{\overline}
\newcommand{\nlb}{\nolinebreak}
\renewcommand{\thmBlockFont}[1]{\ssc{#1}}
\newcommand{\vmfiguretodo}[2][]{%
  \begin{figure}[ht!]
    \frame{%
      \begin{minipage}{\linewidth}
        ~\hfill~
        \vspace*{#2cm}
      \end{minipage}
    }
    \ifthenelse{\equal{#1}{}}{}{\caption{#1}}
  \end{figure}
}
\newcommandx{\newcommandWithStar}[3][1=i]{%
  \newcommandy[#1]{#2}{\protect\@ifstar{\leavevmode\protect\nlb$\protect#3$}{#3}}
}
\renewcommand{\leq}{\leqslant}
\renewcommand{\geq}{\geqslant}
\renewcommand{\phi}{\varphi}
\renewcommand{\epsilon}{\varepsilon}
\renewcommand{\mod}{\text{~mod~}}
\newcommand{\fa}{\forall}
\newcommand{\ext}{\exists}
\newcommand{\bk}{\mathrel{\backslash }}
\newcommand{\e}{\text{\quad}}                 % un moins petit espace
\newcommand{\ee}{\text{\qquad}}               % un espace
\newsavebox{\InterSymbolSpace}
\savebox{\InterSymbolSpace}{\hspace{0.125em}}
\newsavebox{\SideFormulaSpace}
\savebox{\SideFormulaSpace}{\hspace{0.2em}}
\newcommand{\msp}{\usebox{\SideFormulaSpace}} % espace pour faire ressortir
\newcommand{\xmd}{\usebox{\InterSymbolSpace}} % espace entre les symboles
\newcommand{\eqpnt}{\makebox[0pt][l]{\: .}}
\newcommand{\eqvrg}{\makebox[0pt][l]{\: ,}}
\newcommand{\EqVrgInt}{\: , \e }
\newcommand{\quantvrg}{\, , \;}
\newcommand{\quantsp}{\ee }
\newcommand{\quantsmsp}{\e }
\newcommand{\LatinLocution}[1]{{\itshape #1}\xspace}
\newcommand{\cf}{\LatinLocution{cf.}}
\newcommand{\UNmbb}{{\mathchoice
{\hbox{$\textstyle\rm 1\kern-0.2em I$}}%
{\hbox{$\textstyle\rm 1\kern-0.2em I$}}%
{\hbox{$\scriptstyle\rm 1\kern-0.15em I$}}%
{\hbox{$\scriptscriptstyle\rm 1\kern-0.1em I$}}%
}}
\newcommand{\Ic}{\mathcal{I}}
\newcommand{\Tc}{\mathcal{T}}
\newlength{\ArrowDiagSize}
\newlength{\ArrowDiagWidth}
\newenvironment{SLDiag}%
   {\psset{style=SLDiagStyle}\begin{psmatrix}}%
   {\end{psmatrix}}%
\newcommand{\CDSL}{\begin{SLDiag}}
\newcommand{\CDSLF}{\end{SLDiag}}
\newenvironment{DiagraBig}%
{\psmatrix[colsep=7ex,rowsep=6ex,arrows=->,nodesep=1ex,npos=.45]}%
{\endpsmatrix}
\newcommand{\CDB}{\begin{DiagraBig}}
\newcommand{\CDBF}{\end{DiagraBig}}
\newenvironment{DiagraSmall}%
{\psmatrix[colsep=3ex,rowsep=3ex,arrows=->,nodesep=1ex,npos=.45]}%
{\endpsmatrix}
\newcommand{\CDS}{\begin{DiagraSmall}}
\newcommand{\CDSF}{\end{DiagraSmall}}
\newcommand{\matriceuu}[1]%
    {\begin{pmatrix} #1 \end{pmatrix}}
\newcommand{\matricedd}[4]%
    {\begin{pmatrix} #1 & #2 \\ #3 & #4 \end{pmatrix}}
\newcommand{\vecteurd}[2]%
    {\begin{pmatrix} #1 \\ #2 \end{pmatrix}}
\newcommand{\ligned}[2]%
    {\begin{pmatrix} #1 & #2 \end{pmatrix}}
\newcommand{\matricett}[9]%
    {\begin{pmatrix}  #1 & #2 & #3 \\
                      #4 & #5 & #6 \\
                      #7 & #8 & #9 \end{pmatrix}}
\newcommand{\vecteurt}[3]%
    {\begin{pmatrix} #1 \\ #2 \\ #3 \end{pmatrix}}
\newcommand{\lignet}[3]%
    {\begin{pmatrix} #1 & #2 & #3 \end{pmatrix}}
\newlength{\jsWidthCol}
\newlength{\blocinterligne}
\newlength{\blocinterligned}
\newlength{\temparraycolsep}
\newlength{\longueurbloc}
\newlength{\hauteurbloc}
\newlength{\centragebloc}
\newlength{\longueurblc}
\newlength{\hauteurblc}
\newlength{\centrageblc}
\newcommand{\blocligne}[1]%
    {\framebox[\longueurbloc]{$#1$}}
\newcommand{\blocmatrice}[1]%
    {\framebox[\longueurbloc]{\rule[\centragebloc]{0mm}{\hauteurbloc}$#1$}}
\newcommand{\blocvecteur}[1]%
    {\framebox{\rule[\centragebloc]{0mm}{\hauteurbloc}$#1$}}
\newcommand{\blcligne}[1]%
    {\framebox[\longueurblc]{$#1$}}
\newcommand{\blcmatrice}[1]%
    {\framebox[\longueurblc]{\rule[\centrageblc]{0mm}{\hauteurblc}$#1$}}
\newcommand{\blcvecteur}[1]%
    {\framebox{\rule[\centrageblc]{0mm}{\hauteurblc}$#1$}}
\newcommand{\matriceddblvs}[4]%%
   {\setlength{\temparraycolsep}{\arraycolsep}%
    \setlength{\arraycolsep}{1.3pt}%
        \left (%
    \begin{array}{cc}%
                #1  & \blcligne{#2} \\
            \blcvecteur{#3} & \blcmatrice{#4}
        \end{array}%
        \right )%
    \setlength{\arraycolsep}{\temparraycolsep}%
   }%
\newcommand{\vecteurdblvs}[2]%
   {\setlength{\temparraycolsep}{\arraycolsep}%
    \setlength{\arraycolsep}{1.5pt}%
        \left (%
    \begin{array}{c}%
                #1  \\
            \blcvecteur{#2}
        \end{array}%
        \right )%
    \setlength{\arraycolsep}{\temparraycolsep}%
   }%
\newcommand{\lignedblvs}[2]%
   {\setlength{\temparraycolsep}{\arraycolsep}%
    \setlength{\arraycolsep}{1.5pt}%
        \left (%
    \begin{array}{cc}%
                #1  & \blcligne{#2}
        \end{array}%
        \right )%
    \setlength{\arraycolsep}{\temparraycolsep}%
   }%
\newcommand{\matricettblvs}[9]%%
   {\setlength{\temparraycolsep}{\arraycolsep}%
    \setlength{\arraycolsep}{1.5pt}%
        \left (%
    \begin{array}{ccc}%
                #1  & \blcligne{#2} & #3\\
            \blcvecteur{#4} & \blcmatrice{#5} & \blcvecteur{#6}\\
                #7  & \blcligne{#8} & #9\\
        \end{array}%
        \right )%
    \setlength{\arraycolsep}{\temparraycolsep}%
   }%
\newcommand{\vecteurtblvs}[3]%
   {\setlength{\temparraycolsep}{\arraycolsep}%
    \setlength{\arraycolsep}{1.5pt}%
        \left (%
    \begin{array}{c}%
                #1  \\
            \blcvecteur{#2}\\
                #3
        \end{array}%
        \right )%
    \setlength{\arraycolsep}{\temparraycolsep}%
   }%
\newcommand{\lignetblvs}[3]%
   {\setlength{\temparraycolsep}{\arraycolsep}%
    \setlength{\arraycolsep}{1.5pt}%
        \left (%
    \begin{array}{ccc}%
                #1  & \blcligne{#2} & #3
        \end{array}%
        \right )%
    \setlength{\arraycolsep}{\temparraycolsep}%
   }%
\newcommand{\matricettblblvs}[9]%%
   {\setlength{\temparraycolsep}{\arraycolsep}%
    \setlength{\arraycolsep}{1.5pt}%
        \left (%
    \begin{array}{ccc}%
                #1  & \blcligne{#2} & \blcligne{#3}\\
            \blcvecteur{#4} & \blcmatrice{#5} & \blcmatrice{#6}\\
                \blcvecteur{#7}  & \blcmatrice{#8} & \blcmatrice{#9}\\
        \end{array}%
        \right )%
    \setlength{\arraycolsep}{\temparraycolsep}%
   }%
\newcommand{\vecteurtblblvs}[3]%
   {\setlength{\temparraycolsep}{\arraycolsep}%
    \setlength{\arraycolsep}{1.5pt}%
        \left (%
    \begin{array}{c}%
                #1  \\
            \blcvecteur{#2}\\
                \blcvecteur{#3}
        \end{array}%
        \right )%
    \setlength{\arraycolsep}{\temparraycolsep}%
   }%
\newcommand{\lignetblblvs}[3]%
   {\setlength{\temparraycolsep}{\arraycolsep}%
    \setlength{\arraycolsep}{1.5pt}%
        \left (%
    \begin{array}{ccc}%
                #1  & \blcligne{#2} & \blcligne{#3}
        \end{array}%
        \right )%
    \setlength{\arraycolsep}{\temparraycolsep}%
   }%
\newcommand{\medskipneg}{\vspace*{-2ex}} %960728
\newcommand{\smallskipneg}{\vspace*{-1ex}} %960728
\newlength{\DefiTest}\setlength{\DefiTest}{0pt}%
\newlength{\DefiHeightu}\newlength{\DefiHeightd}%
\newlength{\DefiDepthu}\newlength{\DefiDepthd}%
\newcommand{\Defi}[2]%
    {%
     \settoheight{\DefiHeightu}{${\displaystyle #1}$}%
     \settodepth{\DefiDepthu}{${\displaystyle #1}$}%
     \addtolength{\DefiHeightu}{\DefiDepthu}%
     \settoheight{\DefiHeightd}{${\displaystyle #2}$}%
     \settodepth{\DefiDepthd}{${\displaystyle #2}$}%
     \addtolength{\DefiHeightd}{\DefiDepthd}%
     \left\{#1%
     \rule[-\DefiDepthd]{\DefiTest}{\DefiHeightd}%
     \xmd\right|%
     \left.%
     \rule[-\DefiDepthu]{\DefiTest}{\DefiHeightu}%
      #2\right\}%
     }
\newlength{\ColoText}% largeur de la colonne "de texte"
\newlength{\ColoFigu}% largeur de la colonne "de figure"
\newlength{\TextFiguSpace}% intervalle entre les deux colonnes
\newlength{\parindenttemp} % for indentation in minipage
\newlength{\parskiptemp} % for alinea spacing in minioage
\newlength{\fboxseptemp} % pour memoriser \fboxsep
\newcommand{\TFBoxing}{}
\newcommand{\TFVertAlig}{}
\newcommand{\LeftLarg}{}
\renewcommand{\LeftLarg}{.66}
\ifdraft\renewcommand{\TFBoxing}{\fbox}\fi
\newcommand{\TxtFg}[3]%
   {%
    \setlength{\ColoText}{#1\textwidth}%
    \setlength{\ColoFigu}{\textwidth}%
    \addtolength{\ColoFigu}{-\ColoText}%
    \addtolength{\ColoText}{-.5\TextFiguSpace}%
    \addtolength{\ColoFigu}{-.5\TextFiguSpace}%
    \ifdraft\setlength{\fboxsep}{0pt}\fi% % mod 000912, 050822
    \noi
    \TFBoxing{%
       \begin{minipage}[\TFVertAlig]{\ColoText}%
%           \RstBLS% 050822
          \setlength{\parindent}{\parindenttemp}%
          \setlength{\parskip}{\parskiptemp}%
          \par\vspace*{0mm}% pour l'alignement sur le haut
             #2
       \end{minipage}%
             }%
    \hspace*{\TextFiguSpace}%
    \TFBoxing{%
       \begin{minipage}[\TFVertAlig]{\ColoFigu}%
          \par\vspace*{0mm}%
             #3%
       \end{minipage}%
             }%
    \ifdraft\setlength{\fboxsep}{\fboxseptemp}\fi%050822
   }%
\newcommand{\TextFigu}[3][\LeftLarg]%
   {\renewcommand{\TFVertAlig}{t}\TxtFg{#1}{#2}{#3}}
\newcommand{\TextFiguC}[3][\LeftLarg]%
   {\renewcommand{\TFVertAlig}{c}\TxtFg{#1}{#2}{#3}}
\newcommand{\TextFiguX}[3][\LeftLarg]
   {%
    \setlength{\ColoText}{#1\textwidth}%
    \setlength{\ColoFigu}{\textwidth}%
    \addtolength{\ColoFigu}{-\ColoText}%
    \addtolength{\ColoText}{-.5\TextFiguSpace}%
    \addtolength{\ColoFigu}{-.5\TextFiguSpace}%
    \addtolength{\ColoFigu}{\ETAExtendedLineWidth}% mod 000912,050822
    \ifdraft\setlength{\fboxsep}{0pt}\fi% % mod 000912, 050822
    \noi
    \ifodd\value{page}%
       \TFBoxing{%
          \begin{minipage}[t]{\ColoText}%
             \RstBLS% 050822
             \setlength{\parindent}{\parindenttemp}%
             \setlength{\parskip}{\parskiptemp}%
             \par\vspace*{0mm}% pour l'alignement sur le haut
                #2
          \end{minipage}%
                }%
       \hspace*{\TextFiguSpace}%
       \TFBoxing{%
          \begin{minipage}[t]{\ColoFigu}%
             \par\vspace*{0mm}%
                #3%
          \end{minipage}%
                }%
    \else
       \hspace*{-\ETAExtendedLineWidth}% mod 000912
       \TFBoxing{%
          \begin{minipage}[t]{\ColoFigu}%
             \par\vspace*{0mm}%
                #3%
          \end{minipage}%
                }%
       \hspace*{\TextFiguSpace}%
       \TFBoxing{%
          \begin{minipage}[t]{\ColoText}%
             \RstBLS% 050822
             \setlength{\parindent}{\parindenttemp}%
             \setlength{\parskip}{\parskiptemp}%
             \par\vspace*{0mm}% pour l'alignement sur le haut
                #2
          \end{minipage}%
                }%
    \fi%
    \ifdraft\setlength{\fboxsep}{\fboxseptemp}\fi%050822
   }
\newcommand{\NoteEnMarge}[1]%
   {%
    \marginpar[\begin{flushright}%
               {\sl {\scriptsize #1}}%
               \end{flushright}]%
              {\begin{flushleft}%
               {\sl {\scriptsize #1}}%
               \end{flushleft}}%
	}%
\newcommand{\Axio}[1]%
   {\pointn #1\hspace*{.1em}\jspointtiret\hspace*{.4em}\ignorespaces}
\newcommand{\ExtnF}[1]%
   {\overset{{\scriptscriptstyle \pmb{\smile}}}{#1}}
\newcommand{\DiffF}[1]%
   {\overset{{\scriptscriptstyle \pmb{\lor}}}{#1}}
\newcommand{\LocaF}[1]%
   {\overset{{\scriptscriptstyle \leftrightarrow}}{#1}}
\newcommand{\jsDist}[2][{}]%
\newcommand{\Pre}{{\operatorname{{\mathsf{Pre}}}}}
\renewcommand{\lim}{{\operatornamewithlimits{\mathsf{lim}}}}
\newcommand{\x}{\! \times \!}
\newcommand{\SerSAnMon}[2]%
    {#1 \langle \! \langle  #2  \rangle \! \rangle }
\newcommand{\SerSAnMonD}[2]%
    {\left[#1\right] \langle \! \langle  #2  \rangle \! \rangle }
\newcommand{\SerMon}[1]%
    {\!\langle \! \langle  #1  \rangle \! \rangle }
\newcommand{\PolSAnMon}[2]%
    {{#1 \langle  #2 \rangle }}
\newcommand{\PolMon}[1]%
    {{\!\langle  #1 \rangle }}
\newcommand{\jsStar}[1]{{{#1}^{*}}}
\newcommand{\Ae}{\jsStar{A}}
\newcommand{\Be}{\jsStar{B}}
\newcommand{\jsPlus}[1]{{{#1}^{+}}}
\newcommand{\Ap}{\jsPlus{A}}
\newcommand{\iotaK}{\iota_{\ShiftInd{K}}}
\newcommand{\compos}{\ccdot }
\newcommand{\phiikpsi}%
{{\varphi ^{-1}\! \compos        \iotaK \! \compos \! \psi }}
\newcommand{\phiiotpsi}[1]%
{{\varphi ^{-1}\! \compos        \iota _{\ShiftInd{#1}} \! \compos \! \psi }}
\newcommand{\phiintkpsi}[1]%
{{(#1\varphi ^{-1}\! \cap K) \psi }}
\newcommand{\jsgeq}{\geqslant }
\newcommand{\jsless}% 001102
   {\mathrel{\leqslant_{\!\!\!\!\scriptscriptstyle{/}}}}
\newcommand{\jsgrea}% 020120
   {\mathrel{\geqslant_{\!\!\!\!\scriptscriptstyle{\backslash}}}}
\newcommand{\lexiconeq}% 010125
   {\preccurlyeq_{\!\!\!\!\!\scalebox{1.8 1}{\scriptscriptstyle{\pmb{/}}}}}
\newcommand{\jsAutUn}[1]%
   {\mbox{$\left\langle \thinspace #1 \thinspace \right\rangle $}}
\newcommand{\ShiftInd}[1]{\raisebox{-0.3ex}{$\scriptstyle{#1}$}}
\newcommand{\actb}{\mathbin{\raisebox{0.2ex}%
                        {${\scriptscriptstyle \circ} $}}}
\newcommand{\ccdot}{\actb} % pour compatibilit\'e
\newlength{\vbh}\newlength{\vbd}\newlength{\vbt}%
\newcommand{\CompAuto}[1]%
    {%
     \settodepth{\vbd}{\mbox{$\displaystyle{#1\strut}$}}%
     \settoheight{\vbh}{\mbox{$\displaystyle{#1\strut}$}}%
     \setlength{\vbt}{\vbh}\addtolength{\vbt}{\vbd}%
     {}%
     \psline[linewidth=0.8pt]{c-c}(0,-.65\vbd)(0,.9\vbh)%
%      \strut\kern0.7pt%
     \hspace*{0.7pt}%
     {#1}%
     \kern0.8pt%
     \psline[linewidth=0.8pt]{c-c}(0,-.65\vbd)(0,.9\vbh)%
     }%
\newcommand{\bornedeuxlignes}[2]%
{\mbox{$
\begin{array}{c}{\scriptstyle #1}\\ {\scriptstyle #2} \end{array}
       $}}
\renewcommand{\path}[1]{\xrightarrow{\ #1 \ }} %100309 pb
\newcommand{\pathaut}[2]{\underset{#2}{\path{#1}}}
\newcommand{\ExpDer}[2][a]%
    {\operatorname{\frac{\partial}{\partial \mbox{$#1$}}}#2}
\newcommand{\ExpDerP}[2][a]%
    {\operatorname{\frac{\partial}{\partial\mbox{$#1$}}}\left(#2\right)}
\newcommand{\ExpDerr}[2][a]%
    {\operatorname{\frac{\partial_{\mathrm{R}}}{\partial \mbox{$#1$}}}#2}
\newcommand{\ExpDerB}[2][a]%
   {\operatorname{\frac{\partial_\mathsf{b}}{\partial \mbox{$#1$}}}#2}
\newcommand{\ExpDerBP}[2][a]%
   {\operatorname{\frac{\partial_\mathsf{b}}{\partial \mbox{$#1$}}}\left(#2\right)}
\newcommand{\customthmname}{Theorem}
\newenvironment{customthm}[2][Theorem]
  {%
  \renewcommand{\customthmname}{#1}%
  \innercustomthm%
  }
  {\endinnercustomthm}
\renewcommand{\x}{\xmd\! \times \!\xmd}
\newcommand{\defin}[1]{Definition~\ref{d.#1}}
\newcommand{\figur}[1]{Fig.\xmd\ref{f.#1}}%{Figure~\ref{f.#1}}
\newcommand{\lemme}[1]{Lemma~\ref{l.#1}}
\newcommand{\propo}[1]{Proposition~\ref{p.#1}}
\newcommand{\theor}[1]{Theorem~\ref{t.#1}}
\newlength{\retraita}\setlength{\retraita}{1.5\parindent}
\newlength{\listespa}\setlength{\listespa}{.8em}
\newcommand{\EnumLbl}[1]{\rm (#1)}%
\newcommand{\jsListe}[1]%
    {\noindent\makebox[\retraita][r]{\EnumLbl{#1}}%
     \hspace*{\listespa}\ignorespaces}
\newcommand{\tha}{\jsListe{a}}
\newcommand{\thb}{\jsListe{b}}
\newcommand{\pointn}{\noindent \makebox[1.2em]{$\bullet$}\ignorespaces}
\newcommand{\THFracs}[2]{\frac{#1}{#2}}%
\newcommand{\pqs}{\THFracs{p}{q}}
\newcommand{\pqps}{\THFracs{p'}{q'}}
\newcommand{\Indpq}[1]{#1_{\pqs}}%
\newcommand{\Indpqp}[1]{#1_{\pqps}}%
\newcommand{\Indtd}[1]{#1_{\tds}}%
\newcommand{\tds}{\THFracs{3}{2}}
\newcommand{\cts}{\THFracs{5}{3}}
\newcommand{\Tpq}{\Indpq{T}}
\newcommand{\Ltd}{\Indtd{L}}
\newcommand{\Lpqp}{\Indpqp{L}}
\newcommandy{\rhythm}[1][1={r}]{{\mathbf{#1}}}
\newcommandy{\rrhythm}[1][1={r}]{\overset{\raisebox{-0.1em}{\tiny$\circ$}}{\rhythm[#1]}}
\newcommandy{\labelling}[1][1={\lambda}]{{\boldsymbol{#1}}}
\newcommandy{\signature}[1][1={s}]{{\boldsymbol{#1}}}
\newcommandy{\slabelling}{\labelling[\sigma]}
\renewcommand{\mod}{\xmd{\text{\scriptsize\textbf \%}}\xmd}
\newcommand{\imod}{\xmd{\text{\tiny\textbf \%}}\xmd}
\newcommand{\fval}[1]{\operatorname{{\pi}_{#1}}}
\newcommand{\fvalp}{\fval{p}}
\newcommand{\evalp}[1]{\fvalp\hspace*{-0.1em}\left(#1\right)}
\newcommand{\fvalpqp}{\fval{\pqps}}
\newcommand{\evalpqp}[1]{\fvalpqp\hspace*{-0.1em}\left(#1\right)}
\newcommand{\Rep}[2]{\langle #1 \rangle_{#2}}
\newcommand{\pqRep}[1]{\Rep{#1}{\pqs}}
\newcommand{\pRep}[1]{\Rep{#1}{p}}
\newcommand{\rRep}[1]{\Rep{#1}{\rhth}}
\newcommand{\intint}[2]{\llbracket#1,#2\rrbracket}
\newcommandy{\pset}[2][1=n,2=R]{E_{#1}^{#2}}
\newcommandy{\rcod}[2][2=r]{\cod{#1}_{\rhythm[#2]}}
\newcommandWithStar{\Tr}{\mathcal{T}_{\rhythm}}
\newcommandWithStar{\taupq}{\Indpq{\tau}}
\newcommandWithStar{\pq}{\frac{p}{q}} 
\newcommandWithStar{\Lpq}{\Indpq{L}}
\newcommandWithStar{\Lpqbar}{\widebar{\Indpq{L}}}
\newcommandWithStar{\Lr}{L_{\rhythm}}
\newcommandWithStar{\Mr}{M_{\rhythm}}
\newcommandWithStar{\Kr}{K_{\rhythm}}
\newcommandWithStar{\Ar}{\mathcal{A}_{\rhythm}}
\newcommandWithStar{\Aps}{A_p^{\xmd *}}
\newcommandWithStar{\Aq}{A_q}
\newcommandWithStar{\Aqs}{A_q^{\xmd *}}
\newcommandWithStar{\negr}{\rhythm[s]_{\frac{p}{q}}}
\newcommandWithStar{\negL}{\widebar{\Lpq}}
\newcommandWithStar{\rpq}{\rhythm_{\frac{p}{q}}}
\newcommandWithStar{\canonlab}{\labelling{\gamma}_\frac{p}{q}}
\newcommandWithStar{\negl}{\labelling[\mu]_{\frac{p}{q}}}
\newcommandWithStar{\Tpqbar}{\widebar{\Tpq}}
\newcommandWithStar{\treer}{\rtree{\rhythm}}
\newcommandWithStar{\Lrl}{L_{\rhythm,\labelling}}
\newcommand{\rtree}[1]{\mathcal{J}_{#1}}
\newcommand{\qmu}{q-1}%{q\xmd\text{\psscalebox{0.95}{$-$}\psscalebox{0.8}{$1$}}}}
\newcommand{\pmu}{p-1}%{p\xmd\text{\psscalebox{0.95}{$-$}\psscalebox{0.8}{$1$}}}}
\newcommand{\iqmu}{{q-1}}
\newcommand{\ipmu}{{p-\text{\psscalebox{0.6}{$1$}}}}
\newcommand{\jpu}{{j~\!\!\text{\psscalebox{0.95}{$+$}\psscalebox{0.90}{$1$}}}}
\newcommand{\ijmu}{{j\text{\psscalebox{0.55}{$-$}\psscalebox{0.6}{$1$}}}}
\newcommand{\iimu}{{i\text{\psscalebox{0.55}{$-$}\psscalebox{0.6}{$1$}}}}
\newcommand{\ikmu}{{k\text{\psscalebox{0.55}{$-$}\psscalebox{0.6}{$1$}}}}
\newcommandx{\yaHelper}[2][1=\empty]{%
\ifthenelse{\equal{#1}{\empty}}%
  { \ensuremath{\scriptstyle{#2}}} % no offset
  { \raisebox{ #1 }[0pt][0pt]{\ensuremath{\scriptstyle{#2}}}}  % with offset
}
\newcommandx{\yrightarrow}[4][1=\empty, 2=\empty, 4=\empty, usedefault=@]{%
  \ifthenelse{\equal{#2}{\empty}}
  { \xrightarrow{ \protect{ \yaHelper[#4]{#3} } } } % there's no text below
  { \xrightarrow[ \protect{ \yaHelper[#2]{#1} } ]{ \protect{ \yaHelper[#4]{#3} } } } % there's text below
}
\newcommand{\vmcustomthmname}{Theorem}
\newcommand{\itree}{i-tree\xspace}
\newcommand{\itrees}{i-trees\xspace}
\newcommand{\vmoverbracket}[1]{\overbracket[0.6pt][2pt]{#1}}
\newcommand{\OBK}[1]{\vmoverbracket{#1}}
\let\c@table\c@figure
  \renewcommand*\showkeyslabelformat[1]%
\normalfont\ttfamily[#1]~~}
\newcommand{\scaleu}{1.4}%1.35
\newcommand{\scaled}{0.85}%1.35
\newcommand{\scalet}{0.3}%1.35
\newcommand{\scaleq}{0.38}%1.35
\newcommand{\alphxye}{{\{ x, y \}}^{*}}
\renewcommand{\pref}[1]{\Pre\,(#1)}
\newcommand{\App}{A_{p'}}                         
\newcommand{\Apps}{\App^{\xmd *}}                         
\newcommand{\Br}{B_{\rhth}}                         
\newcommand{\Brs}{\Br^{\xmd *}}                         
\newcommand{\rhth}{\rhythm} % \mathbf{r} cf aux_macros.tex
\newcommand{\rhthtp}{(r_0,r_1,\ldots,r_{\qmu})} % \mathbf{r} cf 
\newcommand{\rhthpq}{\rhth_{\pqs}} % \mathbf{r} cf aux_macros.tex
\newcommand{\rhthct}{\rhth_{\cts}} % \mathbf{r} cf aux_macros.tex
\newcommand{\gr}{growth ratio\xspace}
\newcommand{\pthw}[1]{\mathtt{path}\hspace*{-0.1em}\left(#1\right)}
\newcommand{\sigs}{\signature[s]}
\newcommand{\lablbd}{\labelling[\lambda]}                         
\newcommand{\labgmm}{\labelling[\gamma]}                         
\newcommand{\labg}[1]{\labgmm_{#1}}                         
\newcommand{\labgp}{\labg{p}}                         
\newcommand{\labgpq}{\labg{\pqs}}                         
\newcommand{\labgrpq}{\labg{\rhthpq}}                         
\newcommand{\labgr}{\labg{\rhth}}                         
\newcommand{\labgptp}{(0,1,\ldots,\pmu)}                         
\newcommand{\labgmtp}{(\gamma_0,\gamma_1,\ldots,\gamma_{\pmu})}                         
\newcommand{\TreeSmb}{\Ic}
\newcommand{\Tree}[1]{\TreeSmb_{#1}}
\newcommand{\Trr}{\Tree{\rhth}}
\newcommand{\Trtuu}{\Tree{(3,1,1)}}
\newcommand{\Trdu}{\Tree{(2,1)}}
\newcommand{\Trtd}{\Tree{\tds}}
\newcommand{\KSmb}{K}
\newcommand{\Klg}[1]{\KSmb_{#1}}
\newcommand{\Klgr}{\Klg{\rhth}}
\newcommand{\Klgtuu}{\Klg{(3,1,1)}}
\newcommand{\prm}{\mathbf{\gamma}}
\newcommand{\prmpq}{\prm_{\frac{p}{q}}}
\newcommand{\chrw}[1]{\mathbf{w}_{#1}}
\newcommand{\wpq}{\chrw{\pqs}}
\newcommand{\wct}{\chrw{\cts}}
\begin{document}
\thispagestyle{plain}
\tolerance=5000
% \linepenalty=5000
% \interlinepenalties=1 20
% ~\newpage
% \vspace*{-2cm}
% \maketitle
% \AtBeginShipoutDiscard
% \isodate

% \draftfalse
% \ifdraft
%   \setcounter{page}{-1}
%   \vspace*{0cm}
%   \clearpage
%   \vfill
%   \setcounter{tocdepth}{2}
%   \tableofcontents
%   \thispagestyle{plain}
%   \vfill
%   \vfill
%   \clearpage 
% \fi
% \drafttrue

\maketitle

\begin{abstract}
  This work builds on the notion of \emph{breadth-first signature} of infinite 
    trees and (prefix-closed) languages introduced by the authors 
    in a previous work. %~\cite{MarsSaka2014b}.
  We focus here on periodic signatures, a case coming from 
    the study of rational base numeration systems; 
    the language of integer representations in base~$\frac{p}{q}$ has a purely periodic
    signature whose period is derived from the Christoffel word of slope~$\frac{p}{q}$.
  Conversely, we characterise languages whose signature are purely periodic as 
    representations of integers in such number systems with 
    non-canonical alphabets of digits.
\end{abstract}

% \begin{center}
%     {\Large \textsc{extended abstract}}
% \end{center}

%=====================================================
% \ifdraft
% \include{RBR-int}
% \include{RBR-sec2}
% \include{RBR-sec3}
% \include{RBR-sec4}
% \include{RBR-sec5}
% \include{RBR-ccl}
% % 
% \include{RBR-bib}
% \clearpage
% \include{RBR-appendix}
% \else
%%%%%%%%%%%%%%%%%%%%%%%%%%%%%%%%%%%%%%%%%%%%%%%%%%%%%%%%%%%%%%%%% GENERAL INPUTS

\section{Introduction}

In this work, we study a family of (infinite) trees and languages 
that are defined by means of a new technique: the \emph{breadth-first 
search} that we have introduced in a recent paper~\cite{MarsSaka2014b}. 
In that paper, we have explained that an ordered tree of finite
degree~$\Tc$ can be characterised by the infinite sequence of the
degrees of its nodes visited in the order given by the breadth-first 
search, called the \emph{signature}~$\sigs$ of~$\Tc$.
This signature~$\sigs$, together with an infinite sequence~$\lablbd$ 
of letters taken in an ordered alphabet characterises then a 
\emph{labelled}  %_{\lablbd}
tree~$\Tc$.

If the sequence~$\lablbd$ is consistent with the signature~$\sigs$ 
--- we call the pair~$(\sigs,\lablbd)$ a \emph{labelled signature} 
--- the breadth-first search  
of~$\Tc$ corresponds to the enumeration in the \emph{radix order} of 
the prefix-closed language~$L_{\Tc}$ of branches of~$\Tc$.
And we have shown (\cite[Th.~1]{MarsSaka2014b}) that regular trees or
(prefix-closed) regular languages are characterised by those labelled 
signatures that are substitutive sequences.

Here, we consider and study  the simplest possible 
signatures, and labelled signatures, when seen as infinite words, 
namely the purely periodic ones.
% what seems to be
The labelled tree --- call it~$\Trtd$ --- shown at \figur{l32} gives 
an example of a labelled  
tree having such a  periodic labelled signature.
The nodes of~$\Trtd$ are numbered by integers in the order of a 
breadth-first search and, with exception of the root~$0$ for a reason 
that will be explained later, even nodes have two children and odd 
nodes one, which results in the sequence
$\msp 2,1,2,1, \ldots = 2\xmd 1^{\omega}\msp$
for the signature.
Moreover,  the sequence of labels of the arcs in the same 
breadth-first search is 
$\msp 0,2,1,0,2,1, \ldots = 0\xmd 2\xmd 1^{\omega}\msp$.

The branch language of~$\Trtd$, that is, the set of words that label 
the paths from the root to every node, is the language~$\Ltd$ of the 
representations of the integers in the so-called \emph{numeration 
system in base~$\tds$} that has been introduced and studied 
in~\cite{AkiyEtAl08}.

\begin{figure}[ht]
	\medskipneg 
% 	\scaled is defined in the root file
%   \setlength{\vmord}{1.25cm}%
%   \setlength{\vmabs}{2.5cm}%
  \centering
%   \VCCall{p3q2_L_leq26.tex}%
  \includegraphics[width=\scaled\linewidth]{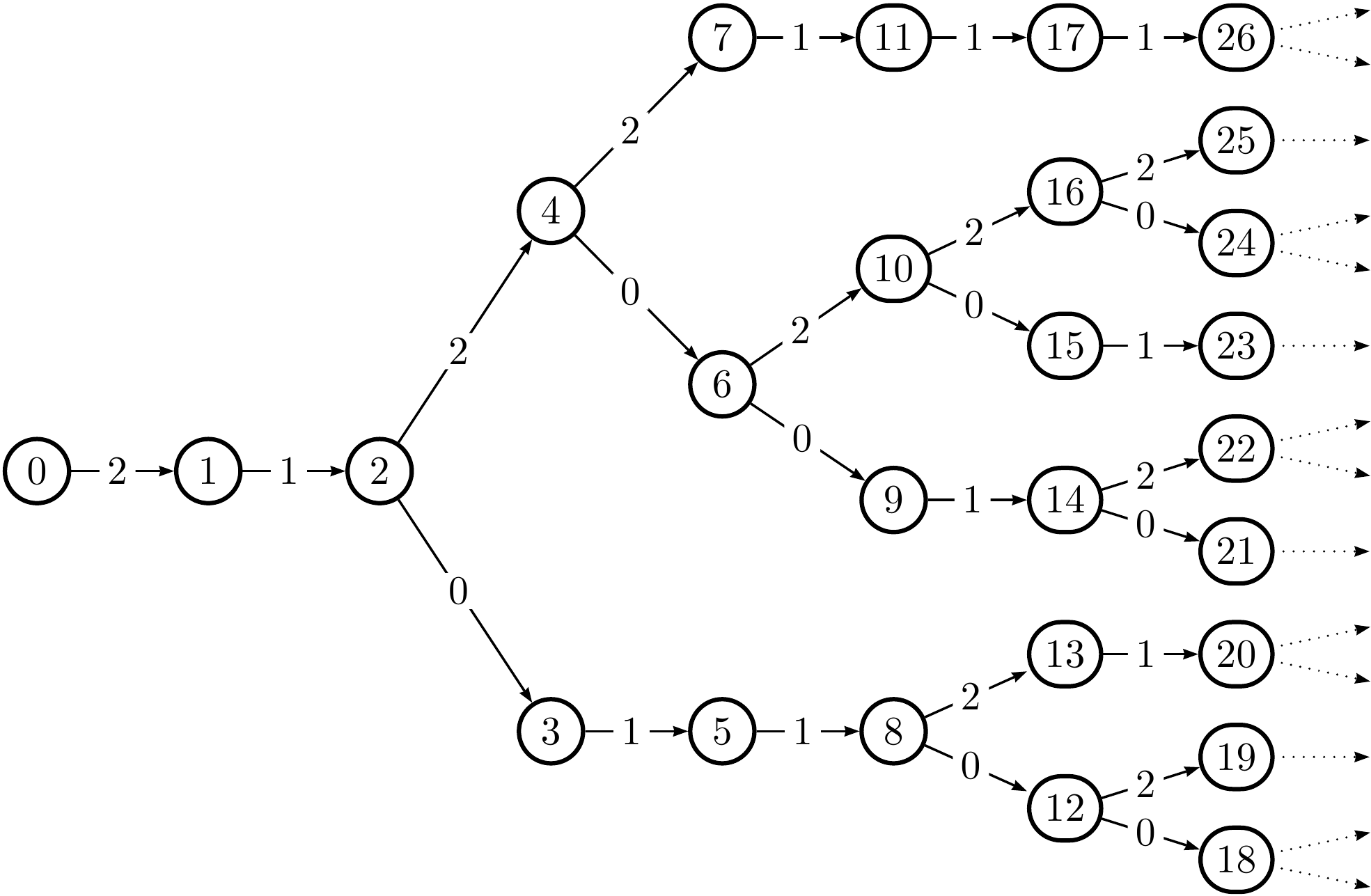}
  \caption{The tree~$\Trtd$, representation of the language~$\Ltd$}
  \lfigure{l32}
	\medskipneg 
\end{figure}

This language~$\Ltd$ and more generally the languages~$\Lpq$ of the 
representations of the integers in base~$\pqs$ are indeed the 
starting point of this work.
It is a challenge to better understand their structure, both from a 
number theoretic point of view --- as we did in~\cite{AkiyEtAl08} --- 
and from a formal language theory point of view as they seem to be at 
the same time `very regular' and completely orthogonal to all 
classification and usual tools of classical formal language theory.
In particular, none of them are regular languages and they even defeat 
any kind of iteration lemma. 
In a former work we have shown that these languages enjoy a kind of 
`autosimilarity property'~\cite{AkiyEtAl13}.
Here, we show that they are somehow characterised by their periodic 
labelled signature.

First, we show that the labelled signature of these languages are 
periodic (\theor{lpq-rhythm-perm}).
Stating the converse requires some more definitions.

We call \emph{rhythm} of \emph{directing parameter}~$(q,p)$ a 
$q$-tuple of integers whose sum is~$p$: $\rhth =\rhthtp$.
We then describe how such rhythm~$\rhth$ allows to
generate a tree such that the $n$-th 
node (in the breadth-first order) has~$r_{k}$ successors, 
where~$k$ is congruent to~$n$ modulo~$q$ (plus a special rule for the root).

The main result of this paper reads then (below we give the definition 
of FLIP languages, which, roughly speaking, are languages that meet 
no kind of iteration lemma):

\begin{customthm}{\protect\rtheorem*{trans-language}}
	Let \Kr* be the branch language of the tree 
 generated by a rhythm~$\rhythm$ of directing parameter~$(q,p)$.
  \begin{enumerate}[label=\alph{*}.]
%     \item is a rational language if the \gr of~$\rhythm$ is an integer;
%     \item is not a context-free language if the \gr of~$\rhythm$ is not an integer.
    \item 
      If \pq* is an integer, then \Kr* is a regular language;
    \item 
      If \pq* is not an integer, then \Kr* is a FLIP language.
\end{enumerate}
\end{customthm}

In the general case, the rhythm of \Lpq* corresponds to \emph{the most equitable 
  way of parting~$p$ objects into~$q$ cases} 
  (with a bias to the left when necessary).
We call it the \emph{Christoffel rhythm} associated with~\pq*,
as
it can be derived from the more classical notion of
  Christoffel word of slope~$\frac{p}{q}$ (\cf~\cite{BersEtAl08}), that is,
  the canonical way to approximate the line of slope~$\frac{p}{q}$
  on a~$\Z\times\Z$ lattice.

\medskip

The proof of \rtheorem{trans-language} is the purpose of \rsection{red-to-rat}
  and consists of the reduction of any structure generated by a rhythm
  to the number system whose base is the \gr of this rhythm.
In fact, the language generated by a rhythm is simply a non-canonical 
  representation of the integers in this base, in the sense that the integers 
  are represented on a non-canonical alphabet.
Using the existing work on alphabet conversion in rational base number 
  systems (\cf \cite{AkiyEtAl08} or \cite{FrouSaka10hb}) it allows to
  conclude that both languages are basically as complicated (or as simple, 
  in the degenerate case where the \gr happens to be an integer).
  
\medskip

This article is organised as follows.  In the preliminaries, we
present the three notions used in the sequel: the numeration system in
base~$\frac{p}{q}$, the Finite Left Iteration Property, and the trees
and their signature.
In \rsection{rhy-tre-lan} we give a precise definition of the
breadth-first generation of infinite trees and language by a rhythm.
Then in \rsection{rhy-gen}, we describe how this process can be used
to generate the language of the representation of integers in a
rational base numeration system.
Finally, in \rsection{red-to-rat}, we prove that any language build by
a rhythm is in some sense a non-canonical representation of the
integers in some underlying rational base.

% \medskip
% 
% In order to meet the space constraints, all the proofs and most of the
% figures have been postponed to the appendix, although they are
% certainly helpful for the understanding.

%%%%%%%%

%%%%%%%%%%%%%%%%%%%%%%%%%%%%%%%%%%%%%%%%%%%%%%%%%%%%%%%%%%%%%%%%% GENERAL INPUTS
\section{Preliminaries and Notation}
\lsection{prelim}%

Given two \emph{positive integers}~$n$ and~$m$, 
  we denote by~$\frac{n}{m}$ their division in~$\Q$; by~$n\div m$ 
  and~$n \mod m$ respectively the quotient and 
  the remainder of the Euclidean division of~$n$ by~$m$, 
  that is,~$\msp {n=(n\div m)\xmd m + (n\mod{m})} \msp$ 
  and~$\msp 0\leq (n\mod{m}) < m$.
Additionally, we denote by~$\intint{n}{m}$ the integer 
  interval~${\set{n,(n+1),\ldots, m}}$.

\subsection{Rational Base Numeration Systems}

Let~$p$ be an integer, $p\geq2$, and~$\Ap=\intint{0}{\pmu}$ the 
alphabet of the~$p$ first digits.
Every word~$w=a_n\xmd a_{n-1} \cdots \xmd a_0$ of~$\Aps$ is given a 
value in~$\N$ by the \emph{evaluation function}~$\fvalp$:
% 
% can be considered as the representation in 
% base~$p$ of an integer by means of the  which associates an integer 
% with every wordn  
% of~$\Aps$:
\begin{equation}
\evalp{a_n\xmd a_{n-1}\cdots\xmd a_0}\e =\e\sum^{n}_{i=0} a_i\xmd p^i
\eqvrg
\notag
\end{equation}
and~$w$ is a $p$-development of~$n$.
%called
Every~$n$ in~$\N$ has a unique \emph{$p$-development} without 
leading~$0$'s in~$\Aps$: it is called the \emph{$p$-representation} 
of~$n$ and is denoted by~$\pRep{n}$.
The \emph{$p$-representation} of~$n$ can be computed from 
left-to-right by a greedy algorithm, and also from 
\emph{right-to-left} by iterating the Euclidean division of~$n$ 
by~$p$, the digits~$a_{i}$ being the successive remainders.
The language of the $p$-representations of the integers is the 
rational language
$\msp  L_{p} = \Defi{\pRep{n}}{n\in\N} 
             = \left(\Ap\bk 0\right)\xmd\Aps\msp$.

\medskip 

Let~$p$ and~$q$ be two co-prime integers,~$p>q>1$.
In~\cite{AkiyEtAl08}, we have generalised these classical, and 
obvious, statements to the the more exotic case of \emph{numeration 
system with rational base~$\pq$}.
Given a positive integer~$n$, let us define~$N_0=N$ and, for 
all~$i>0$,
\begin{equation}
  q\xmd N_i = p\xmd N_{(i+1)} + a_i 
  \eqvrg
%   \notag
\lequation{mod-euc-div}
\end{equation}
  where~$a_i$ is the remainder of the Euclidean 
  division of $q\xmd N_i$ by~$p$,
  hence in~${\Ap=\intint{0}{p-1}}$.
Since~$p>q$, the sequence~$(N_i)_{i\in\N}$ is strictly decreasing and 
  eventually stops at~$N_{k+1}=0$. 
Moreover, it holds that
\begin{equation}
  N = \sum^{k}_{i=0} \frac{a_i}{q} \left( \pq \right)^i 
  \eqpnt
  \notag
\end{equation}

The \emph{evaluation function}~$\pi_{\pq}$ is derived from this formula. 
Given a word~${a_na_{n-1}\cdots a_0}$ over \Ap*, and indeed over any 
  alphabet of digits,  its \emph{value} is defined by
\begin{equation}\label{eq.pi}
  \pi_{\pq}(a_na_{n-1}\cdots a_0) = 
  \sum^{n}_{i=0} \frac{a_i}{q} \left( \pq \right)^i
  \eqpnt 
\end{equation}

% \vmparagraph{The $\Lpq$ language}
Conversely, a word~$u$ in~$\Aps$ is called a \pq*-\emph{representation} of
  an integer~$x$ 
%   \linebreak 
  if~${\val{u}=x}$.
Since the representation is unique up to leading 0's
  (see~\cite[Theorem~1]{AkiyEtAl08}) the 
  \pq*-representation of~$x$ which does
  not starts with a 0 is is denoted by~$\cod{x}_{\pq}$
%   (or~$\cod{x}$ for short)
  and 
%   in the case of integers, 
  can be computed with the 
  modified Euclidean division algorithm above.
By convention, the representation of 0 is the empty word~$\epsilon$.
The set of $\pq$-representations of integers is denoted by~$\Lpq$:
\begin{equation}
  \Lpq = \Defi{\pqRep{n}}{n\in\N} \eqpnt
  \notag
\end{equation}

It is immediate that~$\Lpq$ is prefix-closed (since, in the modified Euclidean
  division algorithm~$\cod{N}=\cod{N_1}.a_0$) and right-extendable
  (for every representation~$\cod{n}$, there exists (at least) an~$a$ 
  in \Ap* such 
  that~$q$ divides~$(np+a)$ and 
  then~${\cod{\frac{np+a}{q}}=\cod{n}.a}$).
As a consequence,~$\Lpq$ can be represented as an infinite tree;
  it is shown at \rfigure{l32}, in the introduction.
By abuse of language, in the following we will write that~$n \pathaut{u}{\Lpq} m$ 
  (or~$n \pathx{u} m$, for short) if~$\cod{m}=\cod{n}.u$;
  it should be noted that, with this notation, the following equation hold.
\begin{equation}
\lequation{pathlpq}%
\forall n \in \N \quantvrg 
\forall m \in \N \quantvrg 
\forall a \in \Ap \quantsp
  n \pathx{a} m \e \iff \e a = q\xmd m - p \xmd n 
\eqpnt 
\end{equation}

It is known that \Lpq* is not rational and not even 
  context-free~(\cf \cite{AkiyEtAl08}).
In fact \Lpq* defeats any reasonable kind of pumping lemma; 
  it possesses the \emph{Finite Left Iteration Property}, 
  discussed in the next \rsection{blip}.

\begin{remark}

Even though we separated the notions of rational and integer base number
  systems in order to give specific statements, it should be noted
  that the former extends naturally the latter.
Indeed, in the case where~$q=1$, the definitions 
  of~$\val{}_{\pq}$,~$\cod{\,\cdot\,}_{\pq}$ and \Lpq* respectively coincide 
  with those of~$\val{}_p$,~$\cod{\,\cdot\,}_p$ and~$L_p$.
In the sequel, we will consider the base \pq* such that~$p>q\geq 1$, 
  that is indifferently one number system or the other.
\end{remark}

\begin{remark}
It should be noted that a rational base number systems is \emph{not} 
  a~$\beta$-numeration --- where the representation of a number is 
  computed by the (greedy) R\'enyi algorithm 
  (cf.~\cite[Chapter~7]{Loth02}) --- in the special case
  where~$\beta$ is a rational number.
In such a system, the digit set is~$\set{0,1, \ldots, \ceil{\pq}}$
  and the weight of the~$i$-th leftmost
  digit is~$(\pq)^i$; whereas in the rational base number system, 
  they are~$\set{0,1~\ldots(p-1)}$ and~$\frac{1}{q}(\pq)^i$ respectively.
\end{remark}

\subsection{The Finite Left Iteration Property (FLIP)}
\lsection{blip}%
               
We define here a strong `non-iteration' property of languages that 
will be closely related to the breadth-first generation process (by 
rhythm) later on.\footnote{%
   We have introduced this property in \cite{MarsSaka13b} under the
   name of \emph{Bounded Left Iteration Property}, or \emph{BLIP} for
   short.  The term `bounded' is indeed improper according to usual
   terminology and was a mistake.}

\begin{definition}\ldefinition{blip}
  A language~$L$ of~$\Ae$ has the \emph{Finite 
    Left Iteration Property}, or is a \emph{FLIP language} for short, 
	if for all~$u$,~$v$ in~$\N$, $|v| \geq 1$,
  \begin{equation*}
% 	\forall u\in A^*\quantvrg~v\in A^+\e u\xmd v^i 
% 	\forall u,v\in A^*\quantvrg |v| \geq 1\quantsp  
	\text{$u\xmd v^i$ is prefix of a word of~$L$ for only
	\strong{finitely} many~$i$ in~$\N$.}
  \end{equation*}
\end{definition}

Clearly, a FLIP language is neither regular, nor context-free.
In~\cite{AkiyEtAl08}, it has been shown that the languages~$\Lpq$, for 
all coprime~$p$ and~$q$, are FLIP languages. 

Although, or because, the Finite Left Iteration Property is the 
strongest way of contradicting any kind of iteration lemma, it 
is difficult to find natural examples of FLIP languages in the 
classical formal language theory.

The set of the prefixes of any infinite aperiodic word is a FLIP 
language, but this is rather trivial an example (as a language) since 
the number of words of every length is~$1$.

\begin{example}%[Prefixes of aperiodic words]
        \lexample{blip-trivial}
Let~$L_{\text{fibo}}=\Defi{\phi^i(0)}{i\in\N}$ be the language of 
Fibonacci words, defined by the Fibonacci morphism~$\phi$:
$\phi(0) = 01$ and~$\phi(1) = 0$.
As the (infinite) Fibonacci word is power 4-free,~$L_{\text{fibo}}$ 
has no prefix of the form~$u\xmd v^4$ and is a FLIP language.
Since the power, that is~4, is independent of~$u$ 
  and~$v$,~$L_{\text{fibo}}$ a fortiori possess the \emph{Bounded} 
  Left Iteration Property.
\end{example}

The {Finite Left Iteration Property} is related  by the following 
statement to another property of language theory called `IRS'  
  (for \emph{Infinite Regular Subset}, \cf~\cite{Greibach1975}): 
a language is IRS if it does not contains any rational sublanguage.

\begin{proposition}
\lproposition{flip-equiv}%
For a language~$L$, the following statements\footnote{%
    $\pref{L}$ denotes the closure of~$L$ by prefix:
	$\pref{L}=\Defi{u\in\Ae}{\ext v\in\Ae\quantsmsp u\xmd v\in L}$.} 
are equivalent:
\begin{enumerate}[label=(\roman{*})]
  \item $L$ is a FLIP language.
  \item $\pref{L}$ is IRS.
  \item The topological closure of~$L$ contains aperiodic (infinite) words only.
\end{enumerate}
\end{proposition}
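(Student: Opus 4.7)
The plan is to establish the two equivalences (i)~$\iff$~(iii) and (i)~$\iff$~(ii) separately, since each links FLIP to a very concrete object: the first to ultimately periodic words in the topological closure of $L$, the second to infinite rational sublanguages of $\pref{L}$. The whole argument hinges on a single preliminary remark about prefix-closure. If $u$ and $v$ with $|v|\geq 1$ satisfy $u v^i \in \pref{L}$ for some $i$, then every $u v^j$ with $j\leq i$ also belongs to $\pref{L}$; consequently, the FLIP condition can equivalently be read as ``$u v^i \in \pref{L}$ for every $i\in \N$'' rather than ``for infinitely many $i$''.

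For (i)~$\iff$~(iii), I would use the standard fact that an infinite word $w$ belongs to the topological closure of $L$ exactly when every finite prefix of $w$ lies in $\pref{L}$. Assuming the negation of~(i), pick $u$ and $v$ with $|v|\geq 1$ and infinitely many $i$ such that $u v^i \in \pref{L}$; the preliminary remark upgrades this to \emph{every} $i$, so every prefix of the ultimately periodic word $u v^{\omega}$ lies in $\pref{L}$ and $u v^{\omega}$ belongs to the closure, contradicting~(iii). Conversely, an ultimately periodic word $u v^{\omega}$ in the closure immediately witnesses the failure of FLIP, since each of its prefixes $u v^i$ lies in $\pref{L}$.

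For (i)~$\iff$~(ii), the easy direction is (ii)~$\Rightarrow$~(i), and it is again a direct consequence of the preliminary remark: if FLIP fails via some $u$ and $v$, then $u v^{*}$ is an infinite rational sublanguage of $\pref{L}$, contradicting IRS. The substantive direction is (i)~$\Rightarrow$~(ii), where the pumping lemma enters. Starting from an infinite rational sublanguage $R \subseteq \pref{L}$, pumping yields a factorisation $x y z$ of some sufficiently long word of $R$ with $|y|\geq 1$ and $x y^i z \in R$ for all $i\in\N$; by prefix-closure of $\pref{L}$, one gets $x y^i \in \pref{L}$ for all $i$, violating FLIP.

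I do not expect any serious obstacle: the proof is essentially a careful bookkeeping of prefix-closure combined with the pumping lemma for rational languages. The only subtle point is the exact phrasing of~(ii), where IRS asks that $\pref{L}$ contain no infinite \emph{rational sublanguage}, not, for instance, no infinite rational subset of $L$ itself; the argument above is tailored precisely to this formulation, and the reduction from infinite rational sublanguage to an arithmetic progression $u v^{\ast}$ of witnesses is exactly what the pumping lemma provides.
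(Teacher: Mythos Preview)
Your proof is correct and uses the same ingredients as the paper: the pumping lemma (the paper phrases it as ``an infinite rational language contains a subset $u\xmd v^{*}w$''), the characterisation of the topological closure via prefixes, and the prefix-closedness of $\pref{L}$. The only difference is organisational: the paper runs a cycle (i)~$\Rightarrow$~(ii)~$\Rightarrow$~(iii)~$\Rightarrow$~(i) (leaving the last implication as ``analogous''), whereas you prove the two equivalences (i)~$\iff$~(iii) and (i)~$\iff$~(ii) separately; your preliminary remark on prefix-closure is a nice explicit touch that the paper leaves implicit.
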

\begin{proof}
(i) $\Rightarrow$ (ii)\e 
If~$\pref{L}$ contains an infinite rational sublanguage, it contains a
subset~$u\xmd v^*w$ for some words~$u$,~$v$ and~$w$, hence, for
infinitely many integers~$i$,~$u\xmd v^i$ is a prefix of some words
of~$L$, a contradiction.
  
{(ii) $\Rightarrow$ (iii)}\e
Let us assume that~$w=u\xmd v^{\omega}$ belongs to the topological 
closure of~$L$. 
It implies that, for every integer~$i$,~$u\xmd v^i$ is the prefix of a word of~$L$,
hence~$u\xmd v^*$ is a sublanguage of~$\pref{L}$, a contradiction. 

The proof of the implication (iii) $\Rightarrow$ (i) is analogous.
\end{proof}

\begin{lemma}
\llemma{flip-stable}
  The class of FLIP languages is stable 
    by finite union, arbitrary intersection, sublanguage,
    concatenation and inverse morphism image.
\end{lemma}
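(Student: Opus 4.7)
The plan is to treat the five closure operations one by one, working directly from Definition~\ref{d.blip}. Three of them are almost immediate. \emph{Sublanguage}: if $L\subseteq L'$ and $L'$ is FLIP, any word having some $uv^i$ as prefix in $L$ also has it as prefix in $L'$, so the finiteness bound transfers. \emph{Arbitrary intersection} then follows at once, since $\bigcap_j L_j$ is a sublanguage of each $L_j$. \emph{Finite union}: if $uv^i$ is a prefix of a word of $L_1\cup L_2$ for infinitely many~$i$, a pigeonhole argument forces one of the two languages to violate its own FLIP condition.

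For \emph{inverse morphism image}, I would use the tacit assumption that morphisms are non-erasing (the property is easily destroyed by an erasing morphism: if $\phi(b)=\epsilon$, then $b^*$ enters the prefix closure of $\phi^{-1}(L)$ as soon as $\phi^{-1}(L)$ is non-empty). Under this assumption, applying $\phi$ to $uv^i$ produces $\phi(u)\phi(v)^i$ with $|\phi(v)|\geq 1$, and the FLIP condition on~$L$ bounds the number of valid~$i$.

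The only substantive case is \emph{concatenation}. Suppose $L_1L_2$ is not FLIP, witnessed by $u$, $v$ with $|v|\geq 1$ such that $uv^i$ is a prefix of some $w_1^{(i)}w_2^{(i)}\in L_1L_2$ for infinitely many~$i$. I would dichotomise on the position of the factorisation cut. If $|w_1^{(i)}|\geq|uv^i|$ for infinitely many~$i$, then $uv^i$ is a prefix of $w_1^{(i)}\in L_1$ infinitely often, contradicting FLIP of~$L_1$. Otherwise $w_1^{(i)}$ is a proper prefix of $uv^i$, hence a prefix of the infinite word $uv^\omega$, for infinitely many~$i$. If the set of such prefixes is infinite, it contains arbitrarily long elements of $L_1$, again violating FLIP of~$L_1$. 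If it is finite, some $w_1$ is reused for infinitely many~$i$; factorising $uv^\omega = w_1\cdot\delta v^\omega$ (with $\delta$ the tail of the period in which the cut falls), the corresponding $w_2^{(i)}$'s have $\delta v^{k_i}$ as prefix with $k_i\to\infty$, contradicting FLIP of~$L_2$.

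The main obstacle is precisely this concatenation case: the cut between $w_1^{(i)}$ and $w_2^{(i)}$ can drift with~$i$, and one must show that in every drifting pattern either $L_1$ or $L_2$ inherits a forbidden prefix of the form $u'v^*$. The dichotomies (cut inside or past $uv^i$; finite or infinite set of reused $w_1$'s), together with a careful splitting of~$v$ around the cut's residual position, cover all possibilities.
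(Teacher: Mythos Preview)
Your proof is correct and, for concatenation (the only case the paper actually proves), follows the same core idea: bound the set of possible ``cuts'' $w_1$ lying inside $uv^\omega$ by FLIP of $L_1$, then for each fixed cut use FLIP of $L_2$. The paper's version is a bit more compact because it invokes the topological-closure characterisation of \propo{flip-equiv} in place of your explicit sub-dichotomy on whether the set of cuts is finite; your caveat that the inverse-morphism case needs the morphism to be non-erasing is well taken and is left implicit in the paper.
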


\begin{proof}[Proof for concatenation]
Let~$L$ and~$M$ be two FLIP languages.
Let~$u$ and~$v$ be two words, and~$J$ the set of integers~$j$ such 
that the prefix~$w_j$ of length~$j$ of~$u\xmd v^\omega$ is in~$L$.
Since~$L$ is FLIP language,~$J$ is finite.
For every~$j$ in~$J$, the ultimately periodic 
word~$(w_j)^{-1}u\xmd v^{\omega}$ is not in the topological closure 
of~$M$, hence there are only finitely many~$k$ such 
that~$(w_j)^{-1}u\xmd v^{k}$ is a prefix of a word of~$M$.
Summing up for all~$j$ in~$J$, there are only finitely many~$k$ such 
that~$u\xmd v^{k}$ is a prefix of a word of~$L\xmd M$.
\end{proof}

It is easy to give examples showing that the class of FLIP languages 
is not closed under complementation, star, transposition 
  and direct morphism image.

\subsection{On Trees and Signatures}
\lsection{ontrees}

Classically, trees are undirected graphs in which any two vertices are 
  connected by exactly one path (\cf \cite{Dies97}, for instance).
Our view differs in two respects.

First, a tree is a \emph{directed} graph~$\Tc=(V,\Gamma)$ such that there exist a 
  \emph{unique} vertex, called \emph{root}, which has no incoming arc,
    and there is a \emph{unique (oriented) path} from the root to every 
    other vertex.
Elements of the tree~$\Tc$ gets particular names: 
  vertices are called \emph{nodes}; 
  if~$(x,y)$ is an arc,~$y$ is called \emph{a child} of~$x$ and~$x$ 
  \emph{the father} of~$y$; 
  a node without children is a \emph{leaf}.
We draw trees with the root on the left, and arcs rightwards.

Second, our trees are \emph{ordered}, that is, that there is a total order 
  on the set of children of every node. 
The order will be implicit in the figures, with the convention that lower 
  children are smaller (according to this order).
% In one word, the two trees of \rfigure{orderedtrees} are different (non-isomorphic).

\medskip 

It will prove to be extremely convenient to have a slightly different
look at trees and to consider that the root of a tree is also a
\emph{child of itself} that is, bears a loop onto itself.
This convention is sometimes taken when implementing tree-like
structures (for instance the unix/linux file system).
We call such a structure an \emph{\itrees}.
It is so close to a tree that we pass from tree to \itree (or
conversely) with no further ado.
\figur{l32} shows a tree and \figur{a-it32} shows the associated 
i-tree. 

\medskip

The degree of a node is the number of its children.
In the sequel, we consider infinite ordered (i-)trees of finite 
degree, that is, all nodes of which have finite degree.
The breadth-first search of such a tree defines a total ordering of 
its nodes.
We then consider that the set of nodes of an (i-)tree is always the 
set of integers~$\N$.
The root is~$0$ and~$n$ is the $(n+1)$-th node visited by 
the search.

Let~$\Tc$ be an ordered (i-)tree of finite degree.
The sequence~$\sigs$ of the degrees of the nodes of~$\Tc$ visited in 
the breadth-first search of~$\Tc$ is called the \emph{signature} 
of~$\Tc$ and is \emph{characteristic} of~$\Tc$, that is, one can 
compute, or build, $\Tc$ from~$\sigs$.
By convention, and whether~$\Tc$ be a tree or an i-tree, \emph{the
signature is always that of the i-tree}.

%%%%%%%%%%%%%%%%%%

\section{Rhythmic trees and languages}
\lsection{rhy-tre-lan}%

\subsection{Rhythms and their geometric representation}
\lsection{rhythm}%

\begin{definition}
\ldefinition{rhythm-growth}%
Let~$p$ and~$q$ be two integers with~$p>q\jsgeq 1$.
  
\begin{enumerate}

\item  
We call \emph{rhythm} of directing parameter~$(q,p)$, 
a \emph{$q$-tuple}~$\rhth$ of non-negative integers \emph{whose sum 
is~$p$}: 

\medskipneg 
\begin{equation}
\rhth = \rhthtp %(r_0,r_1,\ldots,r_{\qmu})
\e\text{and}\e
\sum_{i=0}^{\qmu} r_i = p
\eqpnt
\notag
\end{equation}
   
\medskipneg 
\item  
We say that a rhythm~$\rhth$ is \emph{valid} if it satisfies the 
following equation:

\medskipneg 
\begin{equation}
  \forall j \in \intint{0}{\qmu} \quantsp 
  \sum_{i=0}^{j} r_i ~>~ \jpu
  \eqpnt
  \label{eq.validrhythm}
\end{equation}
      
\medskipneg 
\item  
We call \emph{\gr}{ }of~$\rhythm$ the rational number~$z=\frac{p}{q}$,
also written~$z=\frac{p'}{q'}$ where~$p'$ and~$q'$ are the quotients
of~$p$ and~$q$ by their greatest common divisor (gcd), hence coprime.

\end{enumerate}
\end{definition}

The \gr of a rhythm is always greater than~$1$.
Examples of rhythms of \gr~$\frac{5}{3}$ are
  $(2,2,1)$, 
  $(3,0,2)$, 
  $(1,2,2)$, 
  $(2,2,1,2,2,1)$, 
  $(2,1,3,0,0,4)$;
  all but the third one are valid;
  the directing parameter is (3,5) for the first three, 
  and (6,10) for the last two.
  
\medskip 

Rhythms are given a very useful geometric 
  representation as \emph{paths} in the $\Z\x\Z$-lattice and such paths are
  coded by \emph{words} of~$\alphxye$ where~$x$ denotes a unit horizontal
  segment and~$y$ a unit vertical segment.
Hence the name \emph{path} given to a \emph{word} associated with a rhythm.
% If the rhythm has directing parameter $(q,p)$, the associated path is a finite 
%   broken line from~$(0,0)$ to~$(q,p)$.

\begin{definition}
\ldefinition{rhy-pat}%
With a rhythm~$\rhth=\rhthtp$ of directing parameter~$(q,p)$, 
we associate the word~$\pthw{\rhth}$ of~$\alphxye$:

\smallskipneg 
\begin{equation}
\pthw{\rhth}=  
   y^{r_0}x\xmd y^{r_1}x\xmd y^{r_2} \cdots x\xmd y^{r_{\iqmu}}x
   \notag
\end{equation}
% $\msp\pthw{\rhth}=
% y^{r_0}x\xmd y^{r_1}x\xmd y^{r_2} \cdots x\xmd y^{r_{\iqmu}}x \msp$ 
which corresponds to a path from~$(0,0)$ to~$(q,p)$ 
    in the~$\Z\x\Z$-lattice.
\end{definition}

\figur{rhy-pat} shows the paths associated with three of the above
rhythms.  
It then appears clearly that \defin{rhythm-growth}.2
% Condition~\equnm{validrhythm} 
can be restated as 
`a rhythm is valid if the associated path is strictly above 
the line of slope 1'.

\begin{figure}[ht!]
%       \medskipneg
%       \medskipneg
% \scalet defined in masterfile
\e
  \subfloat[rhythm (3,1,1)]%
     {\lfigure{christ-311}%
          \includegraphics[width=\scalet\linewidth]{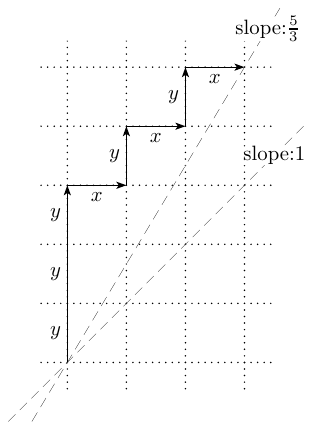}}%
  \hfill
  \subfloat[rhythm (2,2,1)]%
     {\lfigure{christ-221}%
         \includegraphics[width=\scalet\linewidth]{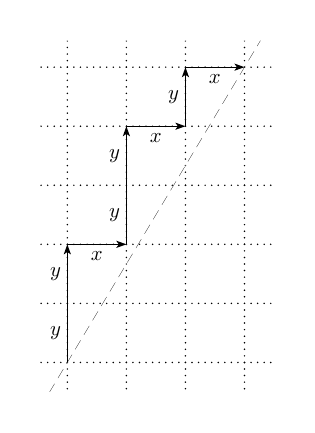}}
  \hfill%
  \subfloat[rhythm (1,2,2)]%invalid 
     {\lfigure{christ-122}%
      \includegraphics[width=\scalet\linewidth]{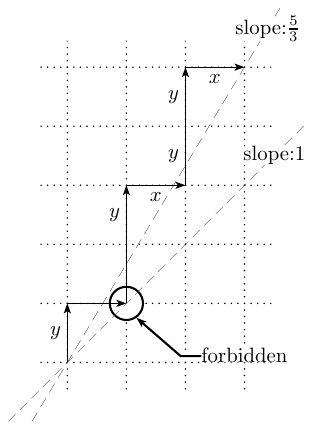}}%
\e
  \caption{Words and paths associated with rhythms of directing parameter (5,3)}
  \lfigure{rhy-pat}
%       \medskipneg
%       \medskipneg
%       \medskipneg
\end{figure}

\subsection{Generating trees by rhythm}
\lsection{tree}%

As said above, we have describe in~\cite{MarsSaka2014b} the procedure 
that reconstructs a tree~$\Tc$ from its signature~$\sigs$.
We present here this construction in the case where~$\sigs$ is a purely 
periodic word~$\sigs=\rhth^{\omega}$. 
First, a static description of the result.

\begin{definition}
\ldefinition{trans-tree}%
\nopagebreak% 
Let~$\rhth=\rhthtp$ be a (valid) rhythm.
The tree~$\Trr$ generated by~$\rhth$ is defined by:
\begin{itemize}
    \item the root~$0$ has~$(r_0 -1)$ children: the nodes~$1$, $2$,\ldots, 
	and~$(r_0 -1)$; 
    \item for every~$n>0$, the node~$n$ has~$r_{n\imod{q}}$ 
	children: the nodes~$(m+1)$, $(m+2)$, \ldots, and~$(m+r_{n\imod{q}})$, 
	where~$m$ is  the greatest child of the node~$(n-1)$.
\end{itemize}
\end{definition}  

\figur{t311} shows~$\Trtuu$, \figur{l32} shows~$\Trdu$ (if one forgets the 
labels on the arcs).

\begin{figure}[ht!]%
%       \medskipneg
%       \medskipneg
\e
  \subfloat[The tree~$\Trtuu$]%
     {\lfigure{t311}% 
      \includegraphics[width=\scaleq\linewidth]{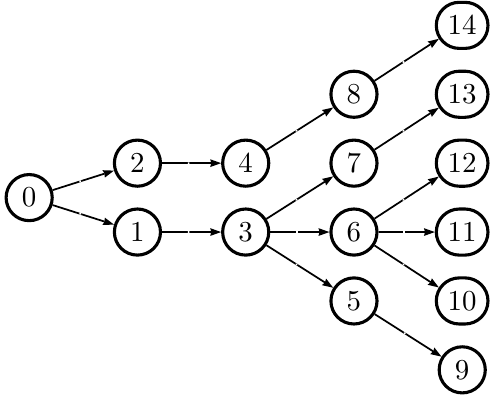}}
  \hfill
  \subfloat[The language~$\Klgtuu$]%\mathcal{K}_{(3,1,1)}
     {\lfigure{k311}%
      \includegraphics[width=\scaleq\linewidth]{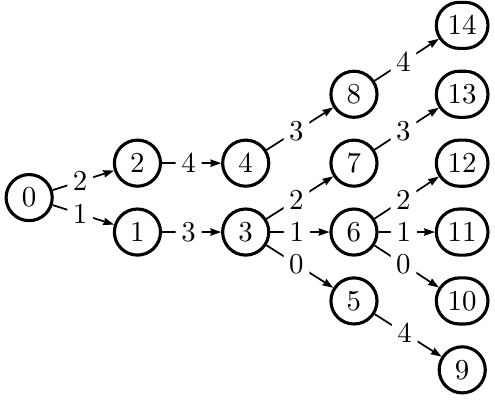}}
\e
  \caption{Tree and language generated by the rhythm~$(3,1,1)$}
  \lfigure{k311-tot}
%       \medskipneg
%       \medskipneg
\end{figure}

\begin{figure}[ht]
%       \scaled is defined in the root file
  \centering
  \includegraphics[width=\linewidth]{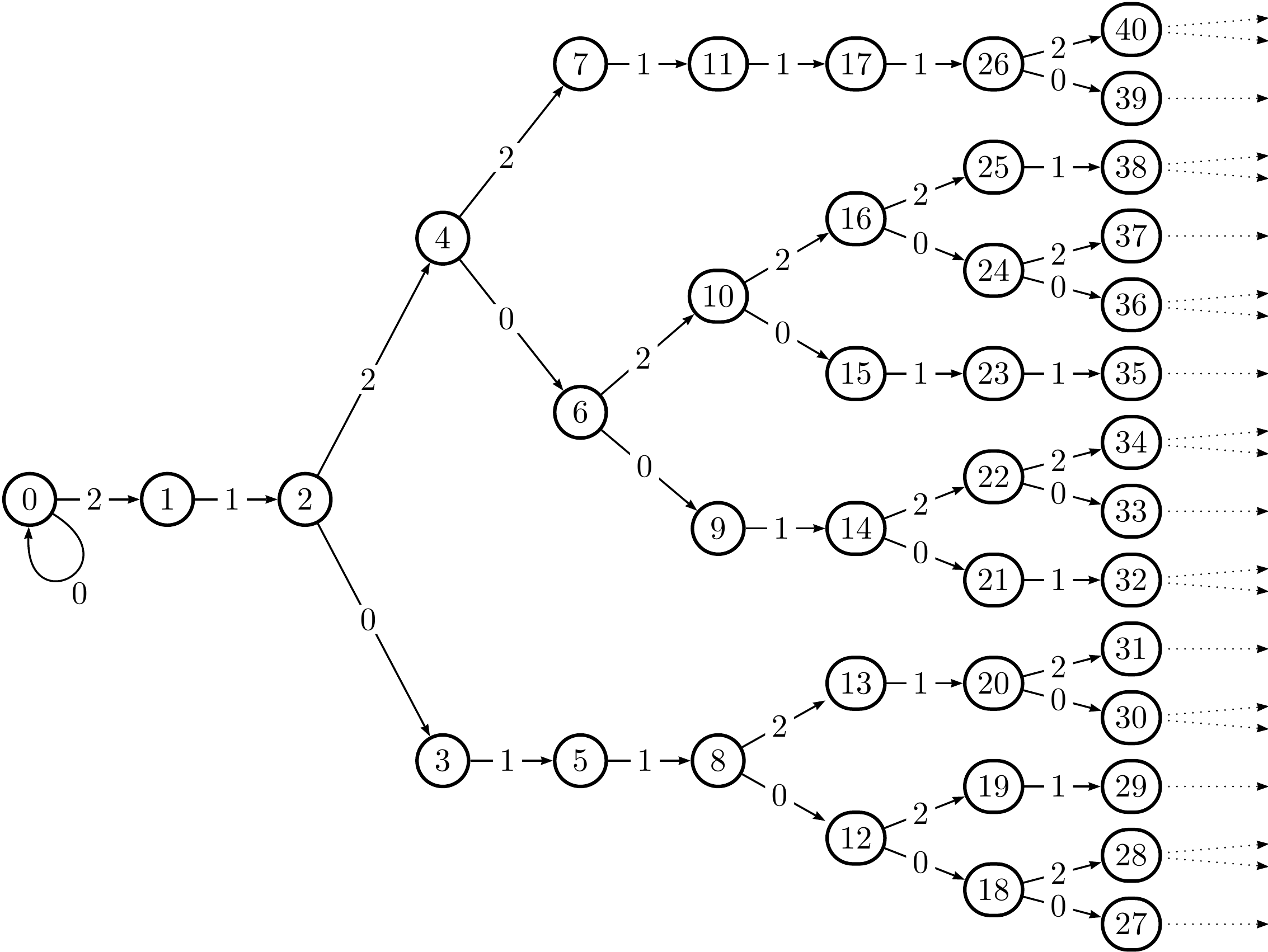}
  \caption{The i-tree associated with~$\Trtd$}
  \lfigure{a-it32}
\end{figure}

\begin{figure}[ht!]
  \centering  
  \includegraphics[width=\linewidth]{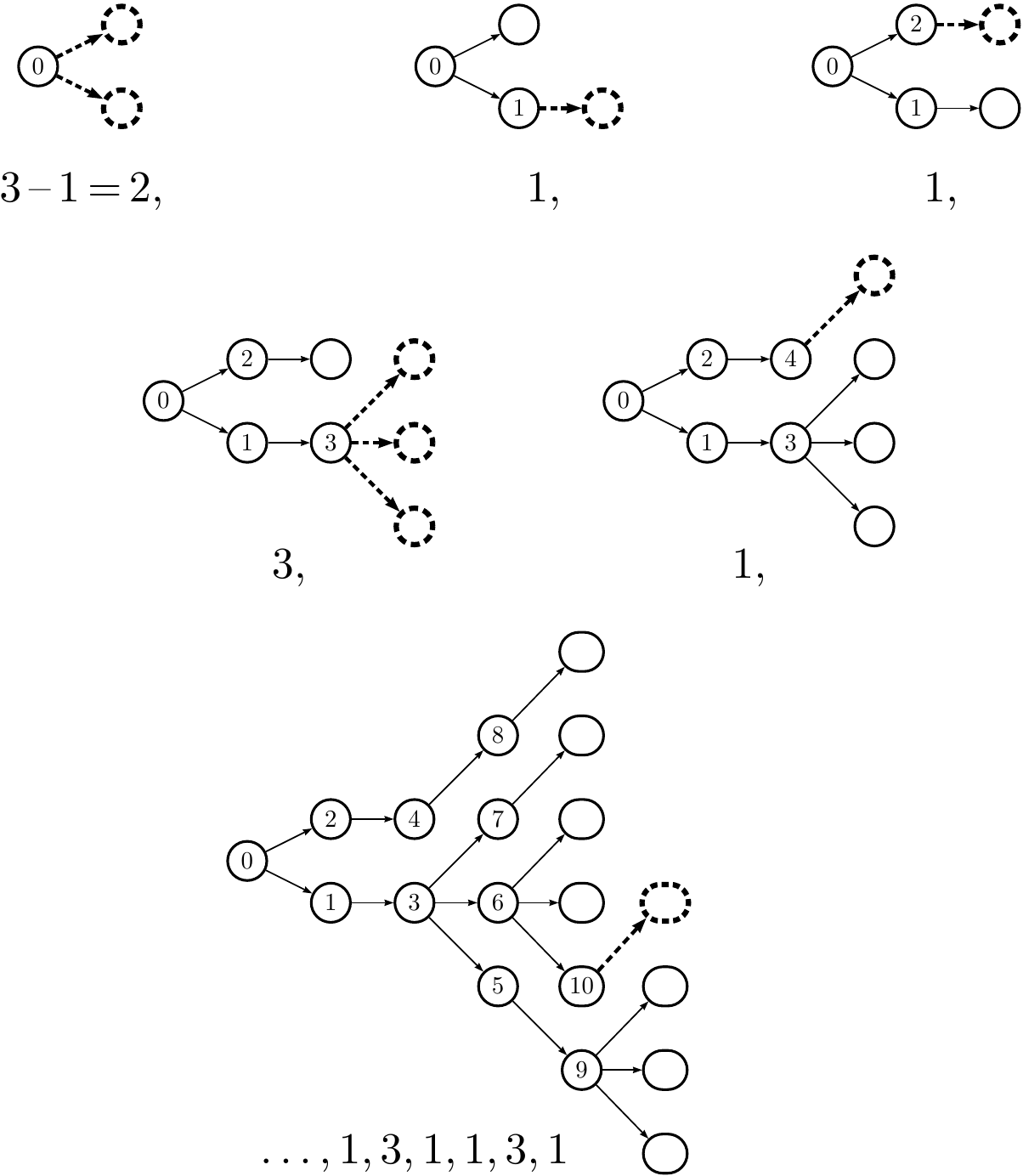}
  \caption{Building~$\Trtuu$ from the rhythm~$(3,1,1)$} 
  \lfigure{a-steps-Trtuu}
\end{figure}
\smallskip 

If we had treated the root~$0$ in the same way as the other nodes, we 
would have said that~$0$ has~$r_{0}$ children: the nodes~$0$, $1$, 
$2$,\ldots, and~$(r_0 -1)$ and we would have  defined the 
\emph{i-tree} (associated with)~$\Trr$. 
For instance, the i-tree~$\Trdu$ is shown at \figur{a-it32}.

The description of a procedure that builds~$\Trr$ from~$\rhth$ gives 
a more dynamical view on the process.
The procedure maintains two integers,~$n$ and~$m$, both initialised 
to~$0$: 
$n$~is \emph{the node to be processed} and~$m$ is \emph{the next node 
to be created}.
At every step of the procedure,~$r_{(n\imod q)}$ nodes are created:
the nodes~$m$, $(m+1)$, \ldots, and~$(m+r_{(n\imod q)}-1)$ 
and~$r_{(n\imod q)}$ arcs are created, from~$n$ to every new node.
Then~$n$ is incremented by~1, and~$m$ by~$r_{(n\imod q)}$.

This procedure indeed builds the \emph{i-tree}~$\Trr$ since its first 
step creates a loop~$0 \pathx{} 0$.
The \emph{tree}~$\Trr$ is obtained by removing this loop, and the
root~$0$ has then~$(r_0 -1)$ children only.
% We then consider that this loop removed in order to obtain the 
% , which implies that the root (that is the node~$0$) has 
% in fact 
%
\figur{a-steps-Trtuu} shows the first five and the tenth steps of the 
procedure for the rhythm~$(3,1,1)$.
It is an easy verification that the tree built by that procedure 
meets the tree defined at \defin{trans-tree}.

The \emph{validity} of the rhythm is the necessary and sufficient 
condition for~$m$ always be greater than~$n$ in the course of the 
execution of the procedure, that is, a node is always `created' 
before being `processed', or, equivalently, for the tree described at 
\defin{trans-tree} be infinite.

A direct consequence of the building of~$\Trr$ by the procedure is 
that~$q$ consecutive nodes of~$\Trr$ (in the breadth-first search) 
have~$p$ (consecutive) children, hence the name \emph{growth} given to 
the ratio~$\pq$.
More precisely, the following holds.

\begin{lemma}
\llemma{growth}%
Let~~$\Trr$ be the tree generated by the rhythm~$\rhth$ of directing 
parameter~$(q,p)$.
Then, for all~$n$, $m$ in~$\N$:
\begin{equation}
n\pathaut{}{\Trr}m
\e\iff\e
(n+q)\pathaut{}{\Trr}(m+p)
\eqpnt 
\notag
\end{equation}
\end{lemma}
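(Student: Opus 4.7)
The plan is to exploit the periodicity of the rhythm (period $q$) together with the explicit dynamic construction of $\Trr$ given just above the lemma, and translate the statement into a direct arithmetic identity on child indices. The key observation will be that shifting the parent index by $q$ shifts the children indices by exactly $p$, because over any window of $q$ consecutive nodes in the breadth-first order the rhythm cycles through $r_0,\ldots,r_{q-1}$ in some order, whose sum is $p$.

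The first step is to work in the i-tree view (where the loop at the root makes the construction uniform) and to prove, by induction on $n$ following the procedure, that the children of node $n$ form the consecutive integer interval
\begin{equation*}
   \bigl\{c_n,\,c_n+1,\,\ldots,\,c_n + r_{n\imod q} - 1\bigr\},
   \qquad \text{where } c_n \;=\; \sum_{i=0}^{n-1} r_{i\imod q}
   \quad\text{and}\quad c_0 = 0.
\end{equation*}
Indeed, each step of the procedure increments the counter $m$ by $r_{n\imod q}$ after having attached $r_{n\imod q}$ consecutive children to the current node $n$, which is exactly the recurrence $c_{n+1} = c_n + r_{n\imod q}$.

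The second step is the arithmetic core. Since $n,n+1,\ldots,n+q-1$ is a complete system of residues modulo $q$, one has
\begin{equation*}
   c_{n+q} - c_n \;=\; \sum_{i=n}^{n+q-1} r_{i\imod q}
                  \;=\; \sum_{k=0}^{q-1} r_k \;=\; p\eqpnt
\end{equation*}
Combined with $r_{(n+q)\imod q} = r_{n\imod q}$, this says that the interval of children of $n+q$ is obtained from the interval of children of $n$ by a translation of $+p$. Hence $n \pathaut{}{\Trr} m$ is equivalent to $(n+q)\pathaut{}{\Trr}(m+p)$, which is the claim.

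I do not expect a real obstacle here: once the formula for $c_n$ is in hand, the lemma reduces to the identity $\sum_{k=0}^{q-1} r_k = p$, which is the very definition of the directing parameter. The only point requiring a little care is the treatment of the root: passing from the tree to the i-tree (by reinstating the self-loop $0\pathx{}0$) is precisely what makes the recurrence $c_{n+1} = c_n + r_{n\imod q}$ hold uniformly for every $n\geq 0$ and thereby validates the shift identity for $n=0$ as well.
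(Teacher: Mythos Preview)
Your proof is correct and follows exactly the idea the paper itself indicates: the paper does not give a formal proof of this lemma but simply remarks that ``$q$ consecutive nodes of~$\Trr$ (in the breadth-first search) have~$p$ (consecutive) children'' and states the lemma as a direct consequence of the building procedure. Your argument is a clean explicit write-up of this, via the formula $c_n=\sum_{i=0}^{n-1} r_{i\bmod q}$ and the observation $c_{n+q}-c_n=p$; the use of the i-tree to handle the root uniformly is also the paper's own device.
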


\subsection{Labelling of Rhythmic Trees}
\lsection{lab-rhy-tre}%

An ordered tree~$\Tc$ defines its signature~$\sigs$, and we have seen how to 
reconstruct~$\Tc$ from~$\sigs$ (at least in the case 
where~$\sigs=\rhth^{\omega}$).
In the same way, a labelled ordered tree~$\Tc$ defines the 
sequence~$\lablbd$ of the labels of the arcs as they are visited in the 
breadth-first search, and~$\Tc$, and hence its branch 
language~$L_{\Tc}$, will be determined by the pair~$(\sigs,\lablbd)$.

The (finite) alphabet of labels is ordered as well --- we consider 
the case of digit alphabets only.
Of course, we want the labelling of~$\Tc$ be consistent with the 
order of~$\Tc$, that is, the breadth-first search of~$\Tc$ yield the 
\emph{radix order} on~$L_{\Tc}$, which is equivalent to the condition 
that the children of every node~$n$ are in the same order as the 
labels of the arcs that come from their father~$n$.

We consider here periodic signatures~$\sigs=\rhth^{\omega}$ 
where~$\rhth$ is a rhythm of directing parameter~$(q,p)$.
We then will consider pairs~$(\sigs,\lablbd)$ 
with~$\lablbd=\labgmm^{\omega}$ where~$\labgmm$ is a sequence of 
letters (digits) \emph{of length~$p$}.
And we say that $\Tc$, and its branch language~$L_{\Tc}$, 
are determined by the pair~$(\rhth,\labgmm)$. 

It follows from \lemme{growth} that the labelling is consistent on the 
whole tree if and only if it is consistent on the first~$q$ nodes, 
hence on the first~$p$ arcs, in which case we say that~$\labgmm$ is 
\emph{valid}.
A first, and obvious, valid labelling is the sequence~$\labgp$ of the 
first~$p$ digits: 
$\msp\labgp=\labgptp\msp$, which we call the \emph{naive labelling}.

\begin{definition}
\ldefinition{kr-nai-lab}%
Let~$\rhth$ be a rhythm of directing parameter~$(q,p)$ 
and~$\labgp$ the naive labelling.
We denote by~$\Klgr$ the branch language of the labelled tree 
determined by the pair~$(\rhth,\labgp)$, that is, the tree~$\Trr$ 
labelled by:
\begin{equation}
    \forall n,m \in \N \quantsp
	n \pathaut{a}{\Trr} m \e \text{with } a=m\mod{p} 
	\eqpnt
    \notag
\end{equation}
\end{definition}

\figur{k311} shows the language~$\Klgtuu$ while~$\Klg{(3,1)}$ is 
shown at \figur{k31a}.

\smallskip 

All elements of the main result of this paper 
% --- already given in the introduction --- 
are now defined: it states that the language built by a rhythm and the
naive labelling is either regular or FLIP, according to whether the
\gr of the rhythm is an integer or not.

\begin{theorem}
\ltheorem{trans-language}%
Let~$\rhth$ be a rhythm of directing parameter~$(q,p)$.

\nopagebreak

\begin{enumerate}[label=\alph{*}., %
                  ref=\protect\rtheorem*{trans-language}\alph{*}]
    \item \ltheorem{trans-lang-rat}
      If \pq* is an integer, then \Kr* is a regular language.
    \item \ltheorem{trans-lang-blip}
      If \pq* is not an integer, then \Kr* is a FLIP language.
\end{enumerate} 
\end{theorem}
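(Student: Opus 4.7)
The plan is to identify $\Kr$ with the language of representations of $\N$ in base $\pq$ (in lowest terms $p'/q'$) over a \emph{non-canonical} digit alphabet derived from the rhythm $\rhth$, and then to transfer the properties of the canonical language $\Lpq$ to $\Kr$ via the alphabet-conversion machinery already developed for rational base numeration systems.

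\textbf{Setting up the bijection with $\N$.} By construction the nodes of $\Trr$ are exactly $\N$, so sending a branch to its endpoint gives a bijection $\Kr\to\N$. Unrolling the breadth-first construction of \defin{trans-tree} together with \lemme{growth}, I would establish an arithmetic identity of the form
\begin{equation*}
n \xrightarrow{\,a\,} m \quad\iff\quad q\,m - p\,n \;=\; a + c(n \bmod q),
\end{equation*}
where the $q$ offsets $c(0),\dots,c(q-1)$ depend only on $\rhth$ (together with a small correction at the root, which has $r_0-1$ children rather than $r_0$). Comparing this with the defining relation \equnmnosp{pathlpq} for $\Lpq$ shows that reading a letter $a \in A_p$ from a node $n$ in $\Trr$ performs the same arithmetic update as reading the canonical $\pq$-digit $a - c(n \bmod q)$ in the $\Lpq$-automaton. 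Hence $\Kr$ is exactly the $\pq$-representation language of $\N$ over a non-canonical alphabet consisting of $q$ shifted intervals, and it is related to $\Lpq$ by a finite letter-to-letter sequential transducer with $q$ states.

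\textbf{Concluding both cases.} If $\pq = k \in \N$, then $\Lpq = L_k$ is the classical regular language of base-$k$ representations, and the image of a regular language under a finite sequential transducer remains regular, giving (a). If $\pq$ is not an integer, then $\Lpq$ is FLIP by \cite{AkiyEtAl08}; invoking the alphabet-conversion results of \cite{AkiyEtAl08,FrouSaka10hb} expresses $\Kr$ out of $\Lpq$ through operations (letter substitution, inverse morphism, intersection with a regular set, finite union) covered by \lemme{flip-stable}, whence $\Kr$ is FLIP, proving (b).

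\textbf{Main obstacle.} The crux is the first step: producing the precise digit-shift formula that makes $\Kr$ genuinely a non-canonical $\pq$-representation language, rather than merely a structurally similar object. This requires careful combinatorial bookkeeping of the breadth-first enumeration of $\Trr$ (in particular to pin down the offsets $c(i)$ as partial sums of $\rhth$) and the correct treatment of the root anomaly. Once this step is in place, the reduction to $\Lpq$ and the transfer of regularity in (a) or FLIP in (b) proceed routinely from the alphabet-conversion toolkit for rational base number systems and from \lemme{flip-stable}.
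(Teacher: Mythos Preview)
Your overall strategy --- identify $\Kr$ as a non-canonical representation language in base $\pqps$ and transfer properties from $\Lpqp$ via the alphabet-conversion machinery --- is exactly the paper's. But two steps in your sketch do not go through as written.

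First, the arithmetic identity you propose is not correct. For an arc $n\pathx{a}m$ in $\Trr$ with naive label $a=m\bmod p$, one has $q'm-p'n=\gamma_{a}$, where $\labgr=(\gamma_0,\ldots,\gamma_{p-1})$ is the \emph{special labelling} of \defin{spe-lab}: within a rhythm block the $\gamma$'s jump by $q'$, not by $1$, so the relation is not of the form $a+c(n\bmod q)$ and the digit alphabet is not a union of ``shifted intervals'' (try $\rhth=(3,1,1)$: $\labgr=(0,3,6,4,2)$). The paper absorbs this by introducing the intermediate language $\Lr$ (\defin{lr}) labelled by $\labgr$, proving $q'm-p'n=\gamma_{m\bmod p}$ directly (\lemme{a-lr-num-pos}), and relating $\Kr$ to $\Lr$ by the letter-to-letter morphism $\phi_\rhth$ of \propo{nlabel>any}.

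Second, your transfer of FLIP runs the wrong way. You want to build $\Kr$ from $\Lpqp$ by operations covered by \lemme{flip-stable}, but your list includes ``letter substitution'', i.e.\ a direct morphism, and FLIP is explicitly \emph{not} closed under direct morphism image. The paper instead argues forward: since $\Kr$ is prefix-closed, FLIP is equivalent to having no infinite regular sublanguage (\propo{flip-equiv}); if $R\subseteq\Kr$ were infinite regular then $\chi_\rhth(\phi_\rhth(R))\subseteq\Lpqp$ would be infinite (both maps are length-preserving bijections on the relevant branch languages) and regular (rational images preserve regularity), contradicting that $\Lpqp$ is FLIP. For part~(a), note also that the paper gives a short direct proof via Nerode equivalence ($q+1$ classes) rather than going through the reduction.
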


The proof of the whole statement consists in a reduction to the case
of the representation language of rational base numeration systems and
occupies indeed the remainder of the paper (\theor{lpq-rhythm-perm}
and \theor{lr-num-pos}).
However, \theor{trans-language}.a can be established by a direct and
simpler proof given below; it is
also a corollary of the main result
of~\cite{MarsSaka2014b}.
\begin{proof}[Proof of \rtheorem{trans-lang-rat}]
%   Let~$p$ and~$q$ be two positive integers where~$q$ is greater 
%     than~$1$ and divides~$p$:
%     $p=k\xmd q$,
%     and let $\rhythm =(r_0,r_1,\ldots,r_{\qmu})$ be a rhythm of 
%     directing parameter~$(q,p)$.
The proof consists in considering the underlying tree~$\Trr$
of~$\Klgr$ as an \emph{infinite} automaton and then proving that it
has a finite number of classes in the Nerode equivalence.  
More precisely, we prove that two states~$n$ and~$m$, with~$n$ and~$m$
strictly positive, are Nerode-equivalent if they are congruent
modulo~$q$.

For every integer~$i$, we denote by~$\sim_i$ the following 
equivalence relation:
  given two states~$n$ and~$m$, we write that~$n\sim_i m$ 
  if, for all word~$u$ of \Aps* of length~$i$, $n\cdot u$ exists $\iff$ $m \cdot u$ exists.
Of course two states~$n$ and~$m$ are Nerode-equivalent if and only 
  if~$n\equiv_i m$ for all integers~$i$.
% We also write~$n\sim m$ if, for all integer~$i$,~$n \sim_i m$,~`$\sim$'
%   is off course the an adaptation of the well known 
%   Nerode-equivalence.

% \noindent \textbf{Claim.}~~{\itshape For all~$n>0$ and~$m>0$, if $n \equiv m~[q]$, then~$n\sim m$.}
  
  Let us consider two integer~$n$ and~$m$ such that $n \equiv m~[q]$. 
  By induction. 
  The relation~$\sim_0$ is trivial and has only one equivalency class, 
    hence~$n \sim_0 m$.

%   is trivially true (it has only one equivalency class).
%   Since~$n \equiv m~[q]$, by definition of generation by rhythm and 
%     labelling,~$n$ and~$m$ have the same outgoing edges, hence~$n \sim_0 m$.

  Let now be $i$ an integer strictly greater than $0$. 
  Since $n \equiv m~[q]$, it follows from \rlemma{growth} that for every~$n'$
    such that~$n \path{a} n'$, then there exists an integer~$m'$ such 
    that~$m \path{b} m'$ and~$n'\equiv m'~[p]$.
  Hence, from \rdefinition{kr-nai-lab},~${a=(n'\mod{p})=(m'\mod{p})=b}$.
  
  Moreover, by hypothesis~$q\mid p$, hence~$n'\equiv m'~[q]$.
  By induction hypothesis,~$n' \sim_{(i-1)} m'$,
    hence~$n \sim_i m$.
    
    \medskip
The automaton accepting~$\Klgr$ has then~$q+1$ states: one for each congruency 
class modulo~$q$ for positive integers,
plus one special state for~$0$ which is initial. 
\rfigure{k31} shows the case of rhythm~$(3,1)$.
\end{proof}

\begin{figure}[ht!]
  \subfloat[The language~$K_{(3,1)}$]{\lfigure{k31a}%
%     \FixVCScale{0.5}%
%     \setlength{\vmord}{1.25cm}
%     \setlength{\vmabs}{2.5cm}
%     \centering
  %   \VCCall{p3q2_L_leq40.tex}%
    \includegraphics[width=.55\linewidth]{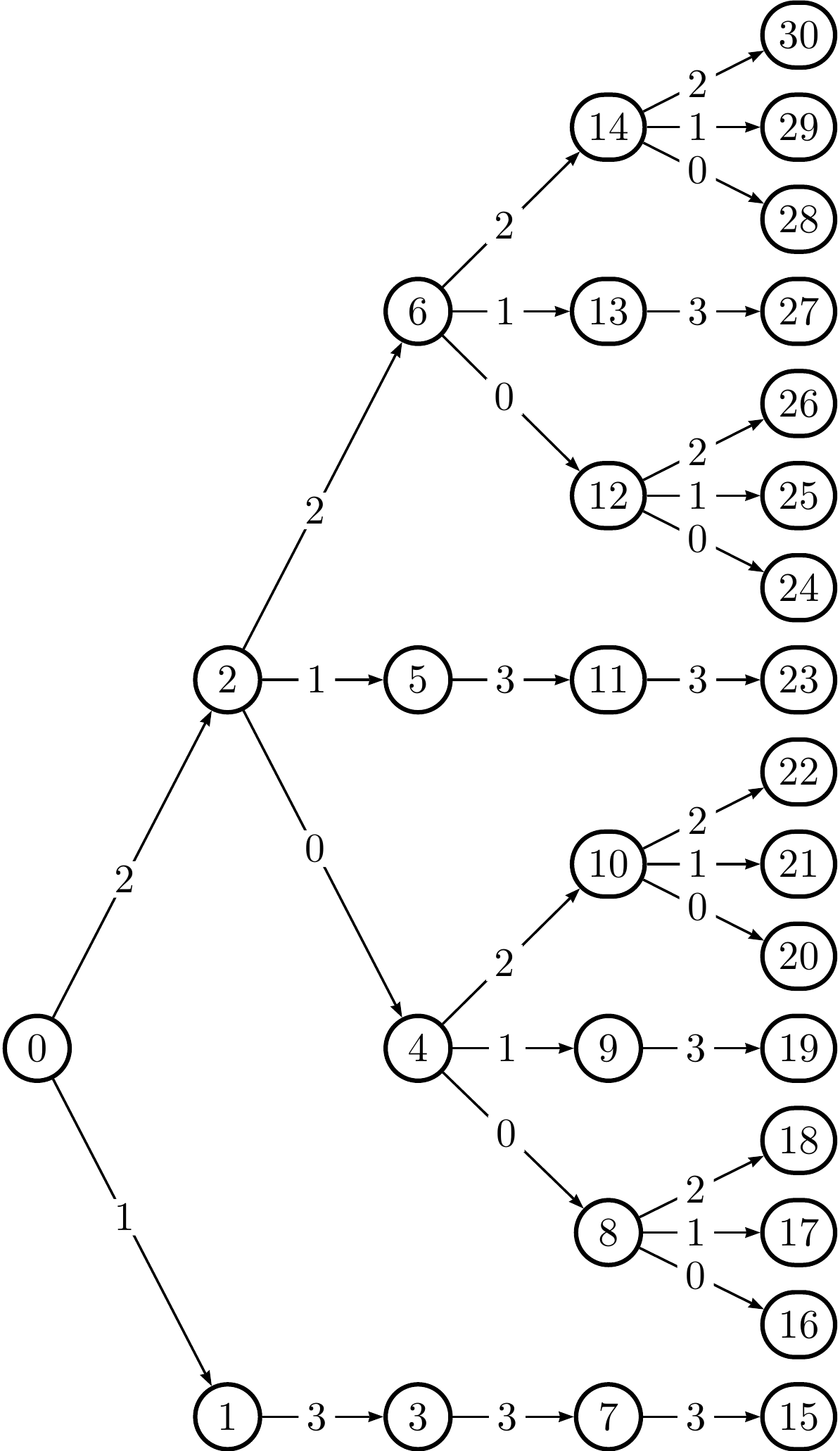}%
  } 
  \hfill
  \subfloat[The automaton accepting~$K_{(3,1)}$]{
  \begin{minipage}[b]{0.35\linewidth}
    \centering
    \includegraphics[width=\linewidth]{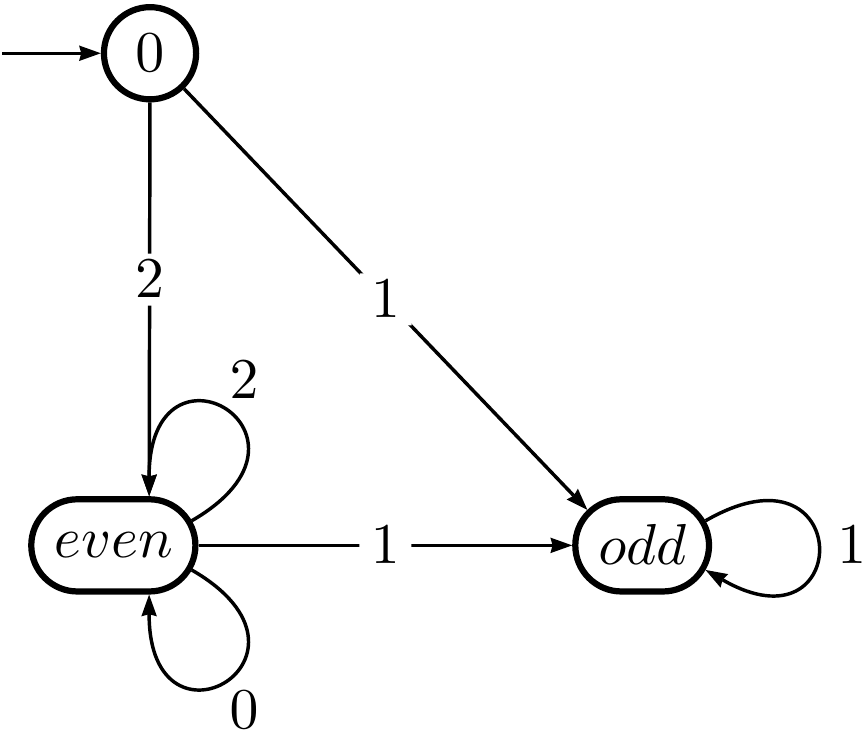}
    \addtocounter{subfigure}{-1}
    \captionof{subfigure}{the automaton accepting~$L_{(3,1)}$}
    \vspace*{2.5cm}
%   \subfloat[The automaton accepting~$K_{(3,1)}$]{   
    \includegraphics[width=\linewidth]{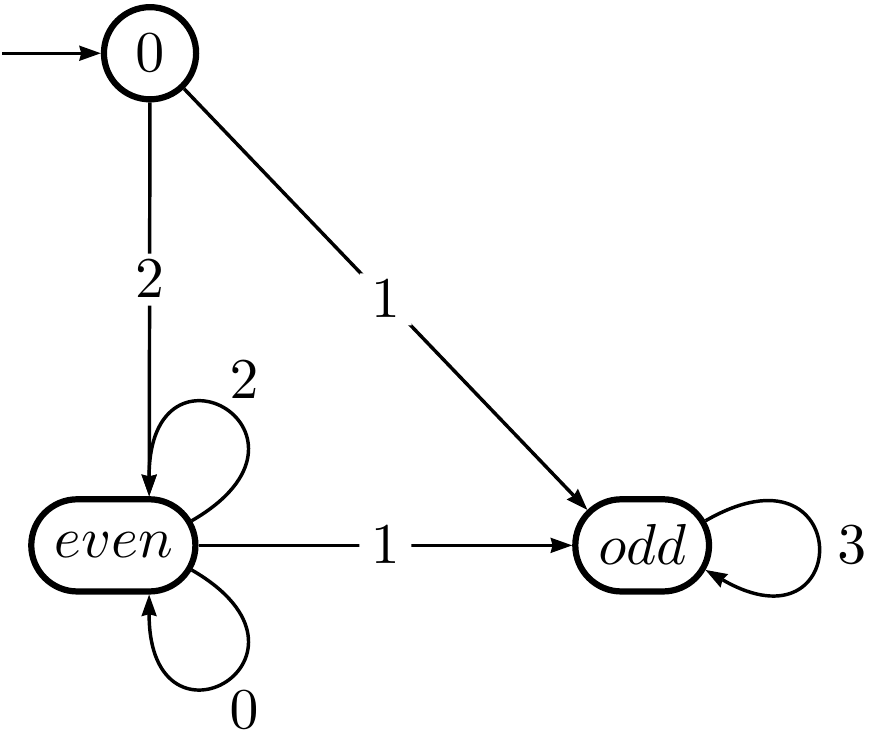}
    \addtocounter{subfigure}{1}
  \end{minipage}
  }
%   \end{subfigure}
  \caption{The case of the rhythm~$(3,1)$ with integral \gr}
  \lfigure{k31}
\end{figure}

\begin{remark}
If one considers the \itree generated by rhythm~$\rhythm$ (instead of
the tree) then the special case for the state~$0$ of the previous proof 
is unnecessary,
there are only~$q$ states and the congruency class~$(0\mod{q})$ is
initial.
For instance, in \rfigure{k31}c, if there were a self-loop on the
state~$0$, this state would be Nerode-equivalent to the state
\emph{even}.
\end{remark}
% \medskip

More generally, 
if~$\rhth$ is a rhythm of directing parameter~$(q,p)$ 
and~$\labgmm=\labgmtp$ a valid labelling, the language generated 
by~$(\rhth,\labgmm)$ is the branch language of~$\Trr$ labelled by
\begin{equation}
    \forall n,m \in \N \quantsp
	n \pathaut{a}{\Trr} m \e \text{with } a=\gamma_{m\mod{p}} 
	\eqpnt
    \notag
\end{equation}

Of course, the naive labelling can be mapped onto any other (valid) 
labelling:

\begin{proposition}
\lproposition{nlabel>any}%
Let~$L$ be the language of~$B^{*}$ generated by~$(\rhth,\labgmm)$.
There exists a strictly alphabetic 
morphism~$\phi\colon\Aps\rightarrow\Be$ such that~${\phi(\Klgr)=L}$. 
\end{proposition}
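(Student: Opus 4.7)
The plan is to read the morphism $\phi$ directly off the labelling $\labgmm$. Concretely, I would set $\phi(i) = \gamma_i$ for every letter $i \in \Ap = \intint{0}{\pmu}$, and extend $\phi$ to a monoid morphism $\Aps \to B^*$. By construction, each letter of $\Ap$ is mapped to a single letter of $B$, so $\phi$ is strictly alphabetic as required.

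The key observation is that $\Klgr$ and $L$ are branch languages of the \emph{same} underlying tree $\Trr$; they differ only in how the arcs are labelled. By \defin{kr-nai-lab} and the labelling convention stated just before the proposition, every arc $n \pathaut{}{\Trr} m$ of $\Trr$ carries the letter $m \mod p$ in the naive labelling producing $\Klgr$, and the letter $\gamma_{m \mod p}$ in the $\labgmm$-labelling producing $L$. Since $\phi(m \mod p) = \gamma_{m \mod p}$ by definition of $\phi$, the morphism transports the naive label of each arc to its $\labgmm$-label.

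It then remains to extend this arc-by-arc identity to entire branches. For every node $n$ of $\Trr$, let $u_n$ (resp.\ $v_n$) denote the label of the unique path from the root to $n$ under the naive (resp.\ $\labgmm$-) labelling, so that $\Klgr = \Defi{u_n}{n \in \N}$ and $L = \Defi{v_n}{n \in \N}$. A straightforward induction on the length of the path to $n$, using the arc-by-arc identity above together with the fact that $\phi$ is a morphism, yields $\phi(u_n) = v_n$ for every $n$. Ranging over $n \in \N$ gives $\phi(\Klgr) = L$. No genuine obstacle arises here: once one recognises that both labellings are carried by the same tree and are connected precisely by the letter map $i \mapsto \gamma_i$, the conclusion is essentially definitional.
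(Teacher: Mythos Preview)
Your argument is correct and is exactly the intended one. The paper does not give a proof of this proposition at all: it is introduced with the phrase ``Of course, the naive labelling can be mapped onto any other (valid) labelling'' and then stated without further justification. Your definition $\phi(i)=\gamma_i$ and the observation that $\Klgr$ and $L$ are branch languages of the same underlying tree $\Trr$ constitute precisely the obvious reasoning the authors are gesturing at.
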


We now define a labelling that will play a crucial role in the sequel.

\begin{definition}
\ldefinition{spe-lab}%
Let~$\rhth$ be a rhythm of directing parameter~$(q,p)$
and~$p'$,~$q'$ the coprime integers such that~$\frac{p'}{q'}=\pq$.
Let~$I_{\rhth}$ be the index set of the partial sums of~$\rhth$, 
that is,~${I_{\rhth}=\Defi{r_0+r_1+\cdots+r_k}{\leq k<\qmu}}$.
We call \emph{special labelling} associated with~$\rhth$, and denote 
by~$\labgr$, 
the $p$-tuple ${\labgr=(\gamma_0,\gamma_1,\ldots,\gamma_\ipmu)}$
defined by
\begin{equation}
\gamma_0 = 0 \EqVrgInt
\fa i \not\in I_{\rhth} \quantsmsp 
\gamma_i = \gamma_{(i-1)} +q' \EqVrgInt
\fa i \in I_{\rhth} \quantsmsp 
\gamma_i = \gamma_{(i-1)} +q' -p' 
\eqpnt
\notag 
\end{equation}
\end{definition}

\begin{example}
For instance, if~$\rhth=(3,1,3,3)$, its directing parameter 
is~$(4,10)$, ${p'= 5}$, $q'=2$, $I_{\rhth}=\set{3,4,7}$ and 
\begin{equation}
\labgr = ( \xmd 
    \overbrace{ 0 \xmd , \xmd 2\xmd, \xmd 4 }^{3 }
      \xmd ,\xmd
    \overbrace{1}^{1} 
      \xmd ,\xmd 
    \overbrace{-2\xmd,\xmd0,\xmd 2}^{3}
      \xmd,\xmd 
    \overbrace{-1\xmd,\xmd1\xmd,\xmd3}^3
    \xmd ) 
	\eqpnt
	\notag
\end{equation}
As other examples:
$\msp\labg{(3,1)}=(0,1,2,1)\msp$
and
$\msp\labg{(3,1,1)}=(0,3,6,4,2)\msp$.
\end{example}

% \medskip

It directly follows from the definition that \emph{the special
labelling is valid}.
%

%%%%%%%%%%%

%%%%%%%%%%%%%%%%%%%%%%%%%%%%%%%%%%%%%%%%%%%%%%%%%%%%%%%%%%%%%%%%%

\section{Rational Base Numeration Systems are Rhythmic}
\lsection{rhy-gen}%

In this section, $p$ and~$q$ are two coprime integers, $p>q\jsgeq 1$,
which define the numeration system with base~$\pq$.
We introduce a special, and canonical, \emph{rhythm} of directing
parameter~$(q,p)$,~$\rhthpq$, hence of \gr~$z=\pqs$, which relates to
the classical notion of \emph{Christoffel words}.
We then characterise the \emph{special labelling}~$\labgrpq$ as
the permutation~$\labgpq$ resulting from the generation of~$\Z/p\Z$ 
by~$q$. 
The remarkable fact is then that the representation language in the 
$\pq$-numeration system is generated by the labelled 
rhythm~$(\rhthpq,\labgpq)$ (\theor{lpq-rhythm-perm}). 
  
\subsection{Christoffel words, Christoffel rhythms}
\lsection{Chr-wor-rhy}

\emph{Christoffel words} code some kind of `best approximation' of 
segments the $\Z\x\Z$-lattice and have been studied in 
the field of combinatorics of words (\cf \cite{BersEtAl08}).
% extensively
We translate them into rhythms.
More precisely (\defin{chr-wor}.a is taken from~\cite{BersEtAl08}):

\begin{definition}
\ldefinition{chr-wor}%
Let~$p$ and~$q$ be two coprime positive integers.

\tha 
The (upper) \emph{Christoffel word} 
  associated with \pq*, and denoted by~$\wpq$, is the label of the path 
  from $(0,0)$ to~$(q,p)$ on the~$\Z\x\Z$ lattice, 
  such that 
  \begin{itemize}
    \item the path is above the line of slope~\pq* passing through the origin;
    \item the region  enclosed by the path and the line 
          contains no point of~$\Z\x\Z$.
  \end{itemize} %of the plane

\thb The \emph{Christoffel rhythm} associated with~$\pqs$, and denoted
by~$\rpq$, is the rhythm whose path is~$\wpq$:
$\msp\pthw{\rpq}=\wpq\msp$, 
hence its directing parameter is~$(q,p)$. 
\end{definition}

\figur{christ-221} shows the path 
of~$\msp\wct = \OBK{y\xmd y}\xmd 
        x\xmd\OBK{y\xmd y}\xmd 
		x\xmd\OBK{y}\xmd x\msp$, 
the Christoffel	word	associated with~$\cts$;
then,~$\rhthct=(2,2,1)$.
Other Christoffel words and their paths are shown at 
\figur{a-christoffel}.

% \rfigure{a-christoffel} shows the Christoffel words 
%   associated with three rational numbers:~$\frac{3}{2}$,~$\frac{5}{2}$, 
%   and~$\frac{5}{3}$.
  
\begin{figure}[ht!]     
%       \medskipneg
%       \medskipneg
%       \medskipneg
\renewcommand{\scaleu}{1.2}%
\e
\subfloat[$\frac{3}{2}:~yy\,x\,y\,x$]{\includegraphics[scale=\scaleu]{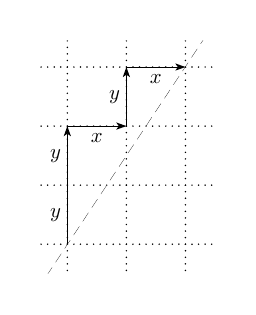}}
\hfill
\subfloat[$\frac{5}{2}:~yyy\,x\,yy\,x$]{\includegraphics[scale=\scaleu]{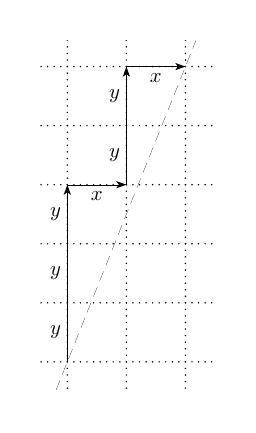}}
\hfill
\subfloat[$\frac{5}{3}:~yy\,x\,yy\,x\,y\,x$]{\includegraphics[scale=\scaleu]{christ-ct}}
\e
\caption{Christoffel words associated with three rational numbers}
\lfigure{a-christoffel}
\renewcommand{\scaleu}{1.4}%
        \medskipneg
\end{figure}

\begin{proposition}
        \lproposition{ceil}%
  Given a base~\pq* of rhythm~$\rpq=(r_0,r_1,\cdots,r_{\iqmu})$, 
  and an integer~${0 < k \leq q}$,
    the partial sum~$r_0+r_1+\cdots+r_{k-1}$, of the first $k$ components 
    of~$\rhythm$ is equal to the smallest integer
    greater than~$k\pq$.
\end{proposition}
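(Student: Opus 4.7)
The plan is to argue directly from the geometric characterisation of the Christoffel word~$\wpq$ recalled in \rdefinition{chr-wor}.a, and to interpret the partial sums of~$\rpq$ as the heights of the path~$\pthw{\rpq}$ immediately after its successive $x$-steps.

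First, I would unfold the definition of~$\pthw{\rpq}$. Starting from~$(0,0)$, after reading the prefix~${y^{r_0}x\xmd y^{r_1}x\cdots y^{r_{k-1}}x}$, the path has reached the lattice point~$(k,s_k)$ where~$s_k=r_0+r_1+\cdots+r_{k-1}$. So the proposition amounts to the identity
\begin{equation*}
s_k \e = \e \min\{\,m\in\N \mid m > k\tfrac{p}{q}\,\}
\qquad \text{for every } 0<k\leq q.
\end{equation*}

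Next I would split on~$k$. For~$0<k<q$, coprimality of~$p$ and~$q$ ensures that~$k\tfrac{p}{q}$ is not an integer, so the right-hand side is simply~$\lceil k\tfrac{p}{q}\rceil$. I would then show the two inequalities~$s_k\geq\lceil k\tfrac{p}{q}\rceil$ and~$s_k\leq\lceil k\tfrac{p}{q}\rceil$ separately. The first inequality follows from the fact that the Christoffel path lies (weakly) above the line of slope~$\pq$: the point~$(k,s_k)$ lies on the path, hence~$s_k\geq k\tfrac{p}{q}$; since~$s_k$ is an integer and~$k\tfrac{p}{q}$ is not, $s_k\geq\lceil k\tfrac{p}{q}\rceil$. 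For the converse inequality, suppose for contradiction that~$s_k>\lceil k\tfrac{p}{q}\rceil$. Then the lattice point~$(k,\lceil k\tfrac{p}{q}\rceil)$ has the same abscissa as a point on the path, has ordinate strictly less than~$s_k$ (so it lies below the horizontal segment ending at~$(k,s_k)$, and hence below the whole path at abscissa~$k$), and has ordinate strictly greater than~$k\tfrac{p}{q}$ (so it lies strictly above the line). Thus it is a lattice point strictly enclosed between the path and the line, contradicting the second item of \rdefinition{chr-wor}.a. For the endpoint case~$k=q$, the sum~$s_q=r_0+\cdots+r_{q-1}$ is equal to~$p$ by the very definition of a rhythm of directing parameter~$(q,p)$, and this matches the claimed value.

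The only genuinely delicate point is the strict-enclosure argument: one must be sure that the auxiliary point~$(k,\lceil k\tfrac{p}{q}\rceil)$ really lies in the bounded region delimited by~$\pthw{\rpq}$ and the segment from~$(0,0)$ to~$(q,p)$, rather than outside of it. This reduces to checking that both coordinates are in the right range, which is immediate since~$0<k<q$ gives~$0<k\tfrac{p}{q}<p$ and the path at abscissa~$k$ is a single horizontal segment at height~$s_k$. Once this is settled, the two inequalities collapse into the desired equality and the proposition follows.
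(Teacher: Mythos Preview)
Your proof is correct and follows essentially the same route as the paper. The paper isolates the geometric step as a separate technical lemma (\rlemma{begrhy}) stating that any prefix~$u\xmd x$ of~$\wpq$ lands at the point~$(k,\lceil k\tfrac{p}{q}\rceil)$, and then observes that the proposition is an immediate reformulation; you instead write out the two inequalities~$s_k\geq\lceil k\tfrac{p}{q}\rceil$ and~$s_k\leq\lceil k\tfrac{p}{q}\rceil$ directly, but both arguments rest on exactly the same two properties of the Christoffel path from \rdefinition{chr-wor}.a (above the line, no enclosed lattice point).
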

\lproposition{cword->chry}%?
                          %
% The proof 
\propo{ceil}
is a direct consequence of the following technical lemma
which is basically a translation of the proposition into the
Christoffel words and their geometric interpretation universe.
  
\begin{lemma}
\llemma{begrhy}%
Let us denote by~$\wpq$ the Christoffel word of slope~$\pqs$.
If~$u\xmd x$ is prefix of~$\wpq$ then it corresponds to
  a path from~$(0,0)$ to~$(k,\ceil{k\pq})$ in the~$\Z\x\Z$ lattice, 
  where~$k$ is the number of~$x$'s in~$u\xmd x$.
\end{lemma}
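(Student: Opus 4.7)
The plan is to let $(k, h)$ denote the endpoint of the prefix $u\xmd x$, where $h$ is the number of $y$ letters it contains, and to establish the two bounds $h \geq \ceil{k\pq}$ and $h \leq \ceil{k\pq}$ separately.

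The lower bound should follow immediately from the first clause of \defin{chr-wor}: since every point of the path lies above the line $Y = \pq\, X$, we have $h \geq k\pq$, and integrality of $h$ forces $h \geq \ceil{k\pq}$.

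For the upper bound I would argue by contradiction: assuming $h > \ceil{k\pq}$, I would exhibit $(k, h-1)$ as a lattice point lying strictly interior to the region enclosed by $\wpq$ and the line, contradicting the second clause of \defin{chr-wor}. That $(k, h-1)$ lies strictly above the line is easy when $0 < k < q$, because coprimality of $p$ and $q$ ensures $k\pq \notin \Z$ and hence $\ceil{k\pq} > k\pq$; the boundary case $k = q$ forces $h = p$ and is trivial. That $(k, h-1)$ lies strictly below the path is the key geometric point: since $u\xmd x$ ends with the letter $x$, the path reaches the vertical line $X = k$ for the first time precisely at $(k, h)$, and all subsequent steps (only $x$ or $y$) move rightward or upward, so every point of $\wpq$ at abscissa $k$ has ordinate at least $h$.

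The main obstacle I anticipate is formalising this last geometric claim rigorously --- pinning down that the first point of $\wpq$ at abscissa $k$ is exactly $(k, h)$, and that therefore $(k, h-1)$ is truly in the interior of the enclosed region rather than merely on its boundary. Here the hypothesis that the prefix ends in $x$ rather than $y$ is essential: without it, an earlier $x$-step followed by a run of $y$-steps could let the path pass through $(k, h-1)$ itself. Once this step is secured, the contradiction with the lattice-point-free condition on the enclosed region is immediate.
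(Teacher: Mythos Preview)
Your proposal is correct and follows essentially the same approach as the paper's proof: both use the two clauses of \defin{chr-wor} to pin down the height at abscissa~$k$, and both exploit the hypothesis that the prefix ends in~$x$ to locate where the path first crosses the vertical line~$X=k$. The paper organises the argument in the reverse order --- it first asserts that the path must pass through~$(k,\lceil k\pq\rceil)$ by the no-lattice-point condition, and then argues that the prefix reaching this point must end in~$x$ (else it would have passed through~$(k,\lceil k\pq\rceil-1)$, a point below the line) --- but the underlying ideas are identical. The ``main obstacle'' you anticipate is not a real difficulty: once you note that~$x$ and~$y$ steps never decrease either coordinate, the fact that the path first reaches~$X=k$ precisely at~$(k,h)$ is immediate.
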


\begin{proof} 
  From \rdefinition{chr-wor} of Christoffel word, there is no integer point
    between the path and the line of slope \pq* and passing through the origin.
    
  Since the point~$(k,k\pq)$ is part of this line, the Christoffel path must 
    pass through the point~$(k,\ceil{k\pq})$.
  Besides, the prefix of the Christoffel word reaching this point must end with 
    an~$x$; were it ending with an~$y$ it would mean that the Christoffel path
    pass through the point~$(k,\ceil{k\pq}-1)$ which is 
    \emph{below} the line of slope \pq*, a contradiction.
\end{proof}

\begin{lemma}
\llemma{endrhy}%
Given a base \pq*, we denote the associated Christoffel rhythm
by~$\rhythm$ and an integer~$k\in\intint{0}{\qmu}$,
$\sum_{i=0}^{\ikmu} r_{(q-1-k)} = \floor{k \xmd \pq}$
\end{lemma}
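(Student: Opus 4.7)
}

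The statement appears to contain a harmless typo: the summand $r_{(q-1-k)}$ should be $r_{(q-1-i)}$, so that the left-hand side is the sum of the \emph{last} $k$ entries of the rhythm, $r_{q-k}+r_{q-k+1}+\cdots+r_{q-1}$. With this reading, the strategy is to reduce the claim to \propo{ceil} via the defining identity $r_0+r_1+\cdots+r_{\iqmu}=p$, which holds because $\rhythm$ has directing parameter $(q,p)$.

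Concretely, I will proceed in three short steps. First, rewrite the target sum by complementation:
\begin{equation}
\sum_{i=0}^{k-1} r_{q-1-i} \e=\e p\e-\e\sum_{i=0}^{q-k-1} r_i .
\notag
\end{equation}
Second, apply \propo{ceil} to the right-hand sum (with $q-k$ in place of $k$), which is legitimate as soon as $0 < q-k \leq q$, to obtain $p - \lceil (q-k)\xmd\pqs \rceil$. Third, use the elementary identity $\lceil m - x \rceil = m - \lfloor x \rfloor$ for $m\in\mathbb{Z}$ to rewrite
\begin{equation}
p - \left\lceil (q-k)\xmd\pqs \right\rceil \e=\e p - \left\lceil p - k\xmd\pqs \right\rceil \e=\e \lfloor k\xmd\pqs \rfloor ,
\notag
\end{equation}
which is the desired equality. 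The boundary values $k=0$ and $k=q$ are checked by hand: the former gives an empty sum equal to $\lfloor 0 \rfloor = 0$, the latter gives the full sum $p = \lfloor p \rfloor$.

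There is no real obstacle here; the only subtlety is the off-by-one in the indexing (which motivates the complementation trick) and the invocation of $\gcd(p,q)=1$, needed implicitly so that $k\xmd\pqs\notin\Z$ for $0<k<q$ and hence the identity relating $\lceil\cdot\rceil$ and $\lfloor\cdot\rfloor$ applies cleanly. Alternatively, one could bypass the algebraic manipulation altogether by arguing geometrically on the Christoffel path: by \lemme{begrhy} the path reaches $(q-k,\lceil (q-k)\pqs\rceil)$ immediately after its $(q-k)$-th $x$-step, so the number of $y$-steps remaining until the endpoint $(q,p)$ is $p-\lceil (q-k)\pqs\rceil$, which (by the same identity above) is $\lfloor k\xmd\pqs\rfloor$; this count is exactly $r_{q-k}+\cdots+r_{q-1}$.
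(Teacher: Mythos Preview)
The paper actually states \lemme{endrhy} without proof; it appears immediately after the proof of \lemme{begrhy} as its evident dual, presumably left to the reader. Your complementation argument is exactly the natural way to make this precise, and it is correct: write the tail sum as $p$ minus the initial sum of the first $q-k$ entries, invoke \propo{ceil}, and simplify. The alternative geometric argument you sketch (counting the remaining $y$-steps after the $(q-k)$-th $x$-step along the Christoffel path) is equally valid and mirrors the paper's geometric proof of \lemme{begrhy}.

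Two minor remarks. First, the identity $\lceil m-x\rceil = m-\lfloor x\rfloor$ holds for every integer $m$ and every real $x$, integer or not; coprimality of $p$ and $q$ is not needed for that step (it is of course part of the hypotheses, since a Christoffel rhythm is only defined for coprime $p,q$, but your argument does not actually use it). Second, the stated range is $k\in\intint{0}{\qmu}$, so your boundary check at $k=q$ is outside the claim, though harmless; the case $k=0$ is the only one not covered by the complementation (since \propo{ceil} requires a positive argument), and you handle it correctly.
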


We define the sequence of integers~$e_0,e_1,\ldots,e_{\qmu}$ such 
  that~$e_{j}$ is the difference between
  the approximation~$(r_0+r_1+\cdots+r_{\ikmu})$ and the point of the 
  associated line of the respective abscissa, that is~$(k\xmd\pq)$.
This difference is a rational number smaller than~$1$ and whose 
  denominator is~$q$, in order 
  to obtain an integer we multiply it by~$q$:
  
\begin{equation}\label{eq.gap-def}
  \forall k\in \intint{0}{\qmu} \quantsp
      e_k \e = \e q\left(\xmd\sum_{i=0}^{k-1} r_i \right)- k \xmd p 
          \eqpnt
\end{equation}

We describe on the example of the base~$\frac{5}{3}$ shown at 
\rfigure{rhythm-ct} a more diagrammatic way of characterising 
Christoffel rhythms.
We associate~$p$ segments of length~$q$ (at the top in the figure)
with~$q$ segments of length~$p$ such that each top segment is associated
to the bottom segment in which it starts.
The integer~$e_i$ is then the difference of length between the~$i$
left-most bottom segments (of length~$p \times i $) and the total
number of top segments associated with them (of length~${q\times
\sum_{j=0}^{\iimu} r_j}$).

\begin{figure}[h!]
%         \medskipneg
%       \medskipneg
  \centering
%   \psscalebox{1.4}{\input{}}
  \includegraphics[width=\linewidth]{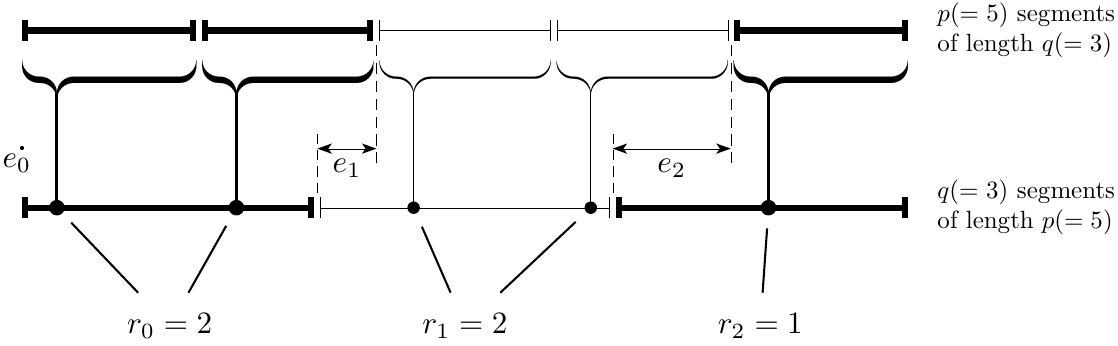}
  \caption{Diagrammatic interpretation of the rhythm~$(2,2,1)$ of base~$\frac{5}{3}$}
  \lfigure{rhythm-ct}
%         \medskipneg
%       \medskipneg
\end{figure} 

The next lemma compiles basic properties of the~$r_j$'s and~$e_j$'s.

\begin{lemma}
\llemma{rhythm-prop}%
Let $\rpq=(r_0,r_1,\ldots,r_{\qmu})$ be the {Christoffel rhythm} 
of slope~$\pq$.
For every integer~$j$ in~$\intint{0}{q-1}$, it holds:
%   \begin{enumerate}[font=\bf,label=\alph{*}.,ref=\protect\rlemma*{rhythm-prop}\alph{*}]
%     \llemma{r-e-props}
\begin{enumerate}%
        [label=\textnormal{(\alph{*})}\,\,,leftmargin=2.5em,%
         topsep=1ex,itemsep=1ex,%
         ref=\protect\rlemma*{rhythm-prop}\alph{*}]

\item \llemma{ej-Aq}
      $e_j$ belongs to $\intint{0}{\qmu}$; 

\item \llemma{smallest}
      $r_j$ is the smallest integer such that~$\msp q \xmd r_j + e_j \geq p\msp$;

\item \llemma{rj(ej)}
      $\msp e_{j+1} = e_j + q\xmd r_j -p $ ;
    
\item \llemma{ej-carac} all~$e_j$'s are distinct;
    
\item \llemma{midrhy2}
      $\msp\displaystyle{q \sum_{k=i}^{\ijmu} r_k +e_i
                  = (j-i)\xmd p + e_{j}}$ ;
    
\item \llemma{midrhy}
for every~$i$ in~$\intint{1}{q}$,~$i>j$, it holds:
\begin{equation}
\sum_{k=i}^{j-1}r_k = \bceil{\frac{(j-i)\xmd p\xmd -\xmd e_i}{q}} 
                    =\bfloor{\frac{(j-i)\xmd p\xmd -\xmd e_j}{q}}
\eqpnt
\notag
\end{equation}
%     \item \llemma{r-twoletters}
%       $r_0 = \bceil{\frac{p}{q}}$,~$r_{\iqmu}= 
%         \bfloor{\frac{p}{q}}$, and for all integer~$j$,
%               ~$r_j=r_0$ or~$r_j=r_{\iqmu}$.
  \end{enumerate}
\end{lemma}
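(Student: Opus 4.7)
The keystone is (c), which follows by direct computation from the definition of~$e_k$ in \equat{gap-def}:
\begin{equation*}
e_{j+1} - e_j = q\Bigl(\sum_{i=0}^{j} r_i - \sum_{i=0}^{j-1} r_i\Bigr) - p = q r_j - p.
\end{equation*}
Once (c) is in hand, every remaining item reduces to a telescoping, integrality, or congruence argument modulo~$q$, together with \propo{ceil}.

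For (a), I apply \propo{ceil} to write $e_j = q\lceil jp/q\rceil - jp$; this immediately lies in~$\{0, 1, \ldots, q-1\}$ by the defining inequality of the ceiling function. Item (b) then combines (a) and (c): rewriting (c) as $qr_j + e_j = p + e_{j+1}$ and invoking (a) on~$e_{j+1}$ gives $p \leq qr_j + e_j < p + q$, which identifies $r_j$ uniquely, and in particular as the smallest nonnegative integer satisfying $qr_j + e_j \geq p$. For (d), iterating (c) yields $e_j \equiv -jp \pmod q$; since $\gcd(p, q) = 1$ the map $j \mapsto -jp \bmod q$ is a bijection on $\{0, 1, \ldots, q-1\}$, and since (a) confines every $e_j$ to that interval, the $e_j$'s are pairwise distinct.

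Item (e) falls out by telescoping (c): summing $e_{k+1} - e_k = qr_k - p$ over $k = i, i+1, \ldots, j-1$ produces
\begin{equation*}
e_j - e_i = q \sum_{k=i}^{j-1} r_k - (j-i)p,
\end{equation*}
which is exactly (e) after rearrangement. Finally (f) is an integrality corollary of (e): the sum $\sum_{k=i}^{j-1} r_k = \bigl((j-i)p - e_i + e_j\bigr)/q$ is an integer, and (a) forces $e_j/q \in [0,1)$, so $((j-i)p - e_i)/q$ sits in the half-open interval~$(\sum - 1, \sum]$ and must equal its ceiling; a symmetric computation, using $e_i/q \in [0,1)$ this time, handles the companion floor identity.

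The argument is essentially bookkeeping: the whole lemma hinges on the one-line calculation establishing~(c), together with \propo{ceil}, which translates \lemme{begrhy} into the integer-valued identity needed here. There is no genuine obstacle; the only care required is to order the items so each can call on its predecessors: (c) first, then (a), then (b) and (d), then (e), and finally (f).
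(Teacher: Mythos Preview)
Your proof is correct and, for items (a)--(c), (e), and (f), matches the paper's argument essentially line for line: (c) by direct subtraction in \equat{gap-def}, (e) by telescoping (c), and (f) by combining (e) with the bounds $0\le e_i,e_j<q$ from (a).

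The one genuine difference is (d). The paper argues by contradiction: if $e_i=e_j$ then (b) forces $r_i=r_j$, whence (c) gives $e_{i+1}=e_{j+1}$, and iterating makes the rhythm periodic of period $|j-i|<q$, which is declared a contradiction (implicitly because coprimality of $p,q$ rules out a shorter period). Your route is more direct: from \equat{gap-def} you read off $e_j\equiv -jp\pmod q$, and since $\gcd(p,q)=1$ the map $j\mapsto -jp\bmod q$ is a bijection on $\intint{0}{q-1}$; together with (a) this gives distinctness immediately. Your argument is cleaner and makes the use of coprimality explicit, whereas the paper's periodicity step leaves that last inference to the reader.
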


\begin{proof}
        
Statements (a), (b), and (c) are simple consequences of the
definition.

\smallskip

(d)\e 
Let~$i$ and~$j$ be in~$\intint{0}{\qmu}$ and suppose that~$e_i=e_j$. 
It follows from~(b) and~(c) that~$r_i = r_j$ and then~$e_{i+1}=e_{j+1}$. 
By iterating this process, it follows that the rhythm~$\rhythm$ is
periodic of period~$\wlen{j-i}$, a contradiction.
 
\smallskip

(e)\e 
Follows from the iteration of~(c).
  
\smallskip
  
(f)\e  
From (e),
$\msp \displaystyle{e_{i+j} = 
   e_i + q\xmd \left(\sum_{k=i}^{j-1} r_k\right) - (j-i)\xmd p}\msp$,
hence
\begin{equation}
\sum_{k=i}^{\ijmu} r_k = \frac{(j-i)\xmd p}{q}-\frac{e_i}{q} + \frac{e_{j}}{q}
\eqpnt
\notag
\end{equation}
Since the right-hand side of this equation is an integer and that~$e_i$ 
and~$e_j$ are smaller than~$q$, $\frac{e_i}{q}$ and $\frac{e_j}{q}$ 
are smaller than~$1$ and the whole statement follows.
\end{proof}

\subsection{Generation of~$\Lpq$ by rhythm and labelling}
\lsection{gen-lpq-rhy}

Since~$p$ and~$q$ are coprime integers,~$q$ is a generator of the 
(additive) group~$\Z/p\Z$.
We denote by~$\labgpq$ the sequence induced by this generation 
process: 
\begin{equation}
\labgpq = 
(\,0,\,q\mod{p},\,(2\xmd q)\mod{p},\ldots,\,((\pmu)\xmd q)\mod{p}\,)
    \notag
	\eqpnt
\end{equation}

\begin{proposition}
\lproposition{seq-gen-spe}%
Let $p$ and~$q$ be two coprime positive integers, $p>q\jsgeq 1$.
The sequence~$\labgpq$ coincide with the special labelling~$\labgrpq$. 
\end{proposition}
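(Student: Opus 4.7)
The plan is to prove the equality coordinate by coordinate, by induction on $i \in \{0, 1, \ldots, p-1\}$. Let $\gamma^*_i = (iq) \mod p$ denote the $i$-th entry of $\labgpq$ and $\gamma_i$ the $i$-th entry of $\labgrpq$. The base case $i = 0$ is immediate since both sides equal $0$. Since $p$ and $q$ are coprime, $p' = p$ and $q' = q$, so \defin{spe-lab} prescribes $\gamma_i = \gamma_{i-1} + q$ when $i \notin I_{\rhthpq}$ and $\gamma_i = \gamma_{i-1} + q - p$ when $i \in I_{\rhthpq}$. The sequence $(\gamma^*_i)$ obeys exactly the same two-case recurrence, except that the drop by $p$ is triggered by the condition ``$\gamma^*_{i-1} + q \geq p$''. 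Assuming inductively $\gamma^*_{i-1}=\gamma_{i-1}$, the inductive step therefore reduces to the set identity
\[
  I_{\rhthpq} \;=\; \bigl\{\,i \in \intint{1}{\pmu} \,:\, \gamma^*_{i-1} + q \geq p\,\bigr\}.
\]

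To establish this identity, I would show that both sides coincide with the third set $S = \{\lceil jp/q \rceil : 1 \leq j \leq q-1\}$. For the right-hand side, $\gamma^*_{i-1} + q \geq p$ rewrites as $\lfloor iq/p\rfloor > \lfloor (i-1)q/p\rfloor$, i.e., as the existence of an integer~$j$ in the half-open interval $\bigl((i-1)q/p,\, iq/p\bigr]$. Coprimality of $p$ and $q$ together with $1 \leq i \leq p-1$ rules out integer endpoints and forces $(i-1)q < jp < iq$ for a unique $j \in \intint{1}{\qmu}$, which rearranges into $i = \lceil jp/q \rceil$. For the left-hand side, \propo{ceil} directly identifies the partial sum $r_0 + \cdots + r_{k-1}$ with $\lceil kp/q\rceil$ for $k \in \intint{1}{\qmu}$, and $I_{\rhthpq}$ is by definition the collection of these partial sums.

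The main difficulty, such as it is, is administrative rather than conceptual: one must verify that the modular addition defining $\labgpq$ wraps around precisely at the indices prescribed by the special labelling's rule, and that the cardinalities are consistent (both sets have exactly $q-1$ elements, which can be checked directly from the telescoping sum $\gamma^*_0 = 0$, $\gamma^*_{p-1} = p-q$). Once the set identity is in place, the inductive step is an unconditional case distinction and the proposition follows.
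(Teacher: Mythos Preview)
Your proof is correct. Both your argument and the paper's ultimately rest on \propo{ceil}, but the paper organises the verification more directly: it first observes that the definition of the special labelling (each step adds~$q$ or~$q-p$) makes the congruence $\gamma_i \equiv i\xmd q \pmod{p}$ immediate, so the entire task reduces to showing $0 \leq \gamma_i < p$. That range check is then a two-line application of \propo{ceil}: writing $m_k = r_0 + \cdots + r_{k-1} = \lceil k p/q\rceil$ and $M_k = m_{k+1}-1$, one has $\gamma_i = q\xmd i - p\xmd k$ for $m_k \leq i \leq M_k$, whence $\gamma_{m_k} \geq 0$ and $\gamma_{M_k} < p$. Your induction with the explicit set identity $I_{\rhthpq} = \{\lceil jp/q\rceil : 1 \leq j \leq q-1\}$ reaches the same conclusion by matching the wraparound indices of the two recurrences; this is equivalent content but slightly more work, since you prove \emph{when} the drop by~$p$ occurs rather than just that the result always lands in~$[0,p)$.
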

\begin{proof}
  We denote by~$(\gamma_0,\gamma_1,\ldots,\gamma_{\iqmu})$ the special 
    labelling~$\labelling[\gamma]_{\rpq}$.
  By the very definition of special labelling, 
    it is obvious that for 
    all~$i\in\intint{0}{\qmu}$,~$\gamma_i\equiv~i\xmd q~[p]$.
  It is then enough to prove that for every 
    integer~$i\in\intint{0}{\qmu}$,~$0\leq\gamma_i<p$.
  Given an integer~$k$ we denote by~$m_k = r_0+r_1+\cdots+r_{k-1}$
    and by~$M_k=(m_{k+1}-1) = r_0+r_1+\cdots+r_{k-1}+ r_k - 1$.
  
  Let us fix an integer~$i$ in the following; 
  there exists an integer~$k$ such that~$m_k \leq i \leq M_k$.
  From the definition of special labelling,~${\gamma_{m_k} = (q\xmd m_k) -(p\xmd k)}$,
      ~${\gamma_{i} = (q\xmd i) -(p\xmd k)}$ 
      and~${\gamma_{M_k} = (q\xmd M_k) - (p \xmd k)}$,
      hence~$\gamma_{m_k} \leq \gamma_i \leq \gamma_{M_k}$.

  From \rproposition{ceil},~$m_k$ is the smallest integer greater than~$k\frac{p}{q}$, 
    hence~${m_k\geq k\pq}$, hence~$\gamma_{m_k} \geq0$.
  Using the same proposition for~$(k+1)$,~$m_{k+1}$ is the smallest integer greater 
    than~$(k+1)\pq$, hence~$M_k (=m_{k+1} -1)$ is strictly 
    smaller than~${(k+1)\pq}$; that is,~$M_k<(k+1)\pq$, hence~$\gamma_{M_k} < p$.
  
  Therefore~$0 \leq \gamma_{m_k} \leq \gamma_i \leq \gamma_{M_k} < p$. 
%   Let us assume that for some integer~$i$~$\gamma_i \geq p$ and denote by~$k,j$
%     the integers such that~$i = ((r_0+r_1+r_2+\cdots+r_k) - j)$ and~$0 < j \leq r_k$.
%   Then~$\gamma_i$ may be expressed as~$\gamma_i = (i\xmd q) - (p\xmd k)$.
%   From \rproposition{ceil},~$(i+j) = \ceil{k\frac{p}{q}}$ hence
%    $k\frac{p}{q} \leq (i+j) < k\frac{p}{q} + 1$
  
\end{proof}

As an immediate consequence of \propo{seq-gen-spe}, $\labgpq$ is a 
\emph{valid labelling}.

\begin{theorem}
\ltheorem{lpq-rhythm-perm}%
Let $p$ and~$q$ be two coprime integers, $p>q\jsgeq 1$.
The language \Lpq* of $\pq$-representations of the integers
is generated by the rhythm~$\rpq$ and labelling~$\prmpq$.
 \end{theorem}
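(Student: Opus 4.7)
The plan is to verify directly that the infinite tree built from the labelled signature $(\rpq, \prmpq)$ coincides with the tree of $\Lpq$, whose nodes are identified with integers via the breadth-first (equivalently, radix, equivalently numerical) order. Since both structures have $\N$ as node set, it suffices to check, for every integer $n$, that $n$ has the same set of children and that each outgoing arc carries the same label in both trees.

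I would first reformulate both sides. In $\Lpq$, equation~(\ref{eq.pathlpq}) reads $n \pathx{a} m$ iff $a = qm - pn \in A_p$; so the children of $n$ are the integers $m$ with $qm \in [pn, pn + p - 1]$, and the label on $n \to m$ is $qm - pn$, which coincides with $(qm) \mod p$. On the other side, \propo{seq-gen-spe} identifies $\prmpq$ with the special labelling $\labgrpq$, whose $i$-th coordinate is $\gamma_i = (iq) \mod p$; the recipe recalled just below \propo{nlabel>any} then gives, for the rhythm-generated tree, the label on $n \to m$ as $\gamma_{m \mod p} = (qm) \mod p$. Thus \emph{as soon as the two sets of children of $n$ are shown to be the same integers}, the labels automatically agree.

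The core step is then a purely arithmetic identification. Set $j = n \mod q$ and let $m(n)$ denote the index of the first child of $n$ in the i-tree produced by $\rpq$. Combining the breadth-first construction of Section~\ref{s.tree} with the periodicity $r_0 + \cdots + r_{q-1} = p$ and the identity $q(r_0 + \cdots + r_{j-1}) = jp + e_j$ coming from equation~(\ref{eq.gap-def}) yields
\[
q\, m(n) = np + e_j, \qquad e_j \in [0, q-1]
\]
(the range being \lemme{ej-Aq}). This immediately gives $m(n) = \lceil np/q \rceil$, the smallest integer $m$ with $qm \geq pn$. Dually, \lemme{smallest} states that $r_j$ is the smallest integer with $qr_j + e_j \geq p$, i.e.\ $r_j = \lceil (p - e_j)/q \rceil$; the elementary identity $\lceil x/q \rceil = \lfloor (x-1)/q \rfloor + 1$ then gives $m(n) + r_j - 1 = \lfloor (pn + p - 1)/q \rfloor$, the largest $m$ with $qm \leq pn + p - 1$. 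Hence the children of $n$ in the i-tree generated by $\rpq$, namely the consecutive integers $m(n), m(n)+1, \ldots, m(n) + r_j - 1$, coincide with the children of $n$ in $\Lpq$.

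The only remaining point is the root, handled separately because \defin{trans-tree} gives the root of $\Trr$ exactly $r_0 - 1$ children (the i-tree self-loop $0 \pathx{0} 0$ being deleted when passing to the tree); in $\Lpq$ the root $0$ likewise has $\lfloor (p-1)/q \rfloor = \lceil p/q \rceil - 1 = r_0 - 1$ children, the only arc being lost being the same trivial one labelled $0$. There is no genuine obstacle; the whole argument is one arithmetic computation, and the only delicate bookkeeping — translating between the geometric gaps $e_j$ of the Christoffel rhythm and the floor/ceiling expressions naturally arising from $\Lpq$ — is already packaged into \lemme{rhythm-prop}.
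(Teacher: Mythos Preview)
Your proposal is correct and takes essentially the same approach as the paper: both verify that each node~$n$ in~$\Lpq$ has the right children (via the gaps~$e_j$, \lemme{ej-Aq}, and \lemme{smallest}) and that the label on the arc into~$m$ is $(qm)\bmod p$. The only difference is cosmetic --- the paper argues by counting the children of~$n$ (its \propo{rhythm} shows the degree is~$r_{n\imod q}$, hence the signature of~$\Lpq$ is~$\rpq^\omega$), whereas you explicitly identify the children as the interval $[\lceil np/q\rceil,\lfloor(np+p-1)/q\rfloor]$; the underlying computation is the same.
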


For instance,~$L_\frac{3}{2}$, shown at \figur{l32}, is built
  with the rhythm~$(2,1)$ and the labelling~$(0,2,1)$.
  The proof of \rtheorem{lpq-rhythm-perm} relies mostly on the following statement
  itself being a consequence of the technical \rlemma{smallest}.
  
% The next proposition constitute the first part of \rtheorem{lpq-rhythm-perm},
%   and is proven in the appendix.
\begin{proposition}\lproposition{rhythm}
  For every integer $n>0$ (resp.~${n=0}$), there is exacly~$r_{(n\imod{q})}$ 
    (resp.~$(r_0-1)$) letters a of \Ap* such that~$\cod{n}.a$ is in \Lpq*.
\end{proposition}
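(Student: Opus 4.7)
The plan is to translate the condition ``$\cod{n}\cdot a \in \Lpq$'' into an arithmetic condition on~$a$ through \equnm{pathlpq}, reducing the count of children of node~$n$ to a count of residues modulo~$q$ in the digit interval~$\intint{0}{\ipmu}$. That count will then be matched with the $(n\mod q)$-th component of the Christoffel rhythm by invoking \lemme{smallest}.

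First I would fix $n \geq 0$ and apply \equnm{pathlpq}: there exists $m \in \N$ with $\cod{m}=\cod{n}\cdot a$ if and only if $a = qm - pn$, equivalently $pn + a \equiv 0 \pmod q$ with $a \in \intint{0}{\ipmu}$. Writing $j = n \mod q$, the residue class for~$a$ depends only on~$j$. Its unique representative in $\intint{0}{\iqmu}$ is, by \equat{gap-def} together with \lemme{ej-Aq}, exactly~$e_j$.

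Next I would count the integers in $\intint{0}{\ipmu}$ congruent to $e_j$ modulo~$q$: they are $e_j, e_j+q, e_j+2q, \ldots$, and since $0 \leq e_j < q < p$, an elementary enumeration yields $\lfloor (p-1-e_j)/q \rfloor + 1 = \lceil (p - e_j)/q \rceil$. By \lemme{smallest}, this ceiling is exactly $r_j$, which settles the generic case $n > 0$.

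Finally, for the root $n = 0$, the same computation would give $r_0$ candidates, but the letter $a = 0$ must be discarded since $\cod{0}\cdot 0 = 0$ is forbidden as a representation with leading zero; this leaves exactly $r_0 - 1$ children, which precisely mirrors the convention by which the self-loop at the root of the \itree is removed to form the tree (cf.\ \rsection{tree}). I do not anticipate any substantial obstacle beyond this bookkeeping: the only nontrivial input from the combinatorics of Christoffel rhythms is \lemme{smallest}, and everything else reduces to straightforward residue counting once the link $(-pj)\mod q = e_j$ has been made.
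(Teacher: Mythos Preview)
Your proof is correct and follows essentially the same route as the paper: both identify the admissible digits as the residue class of~$e_j$ modulo~$q$ inside~$\intint{0}{\ipmu}$ and then invoke \lemme{smallest} to see that this class has exactly~$r_j$ elements. Your argument is in fact slightly more self-contained---you derive the identification $(-pj)\bmod q = e_j$ directly from \equat{gap-def} rather than via the auxiliary \lemme{ej-smallabel}, and you handle the root case~$n=0$ explicitly where the paper's proof leaves it implicit.
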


\begin{proof}
  We denote by~$j$ the congruency class of~$n$ modulo~$q$.
  From \rlemma{smallest},~$r_j$ is the smallest integer
    such that~$q\xmd r_j + e_j > p$.
  It follows that for all~$k$ in~${\intint{0}{r_{j}-1}}$ ~${(e_j+q\xmd k) < p}$ 
    and~${e_j +q\xmd r_j > p}$.
    
  From \rlemma{smallest}, $e_j$ is the smallest label of the 
    state~$n$, hence the state~$n$ has exactly~$r_j$ outgoing transitions, 
    respectively labelled by~${e_j,e_j+q,\ldots, (e_j+q\xmd(r_j-1)})$. 
\end{proof}

\begin{lemma}\llemma{ej-smallabel}
  Given a base \pq* and an integer~$n$, the smallest letter~$a$ of \Ap* such that 
    $\cod{n}.a$ is in \Lpq*, is~$e_{(n\mod{q})}$.
\end{lemma}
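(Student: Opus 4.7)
The plan is to derive the set of valid letters outgoing from state $n$ directly from equation~\eqref{eq.pathlpq}, and then identify its smallest element with $e_{(n\bmod q)}$ by comparing residues modulo $q$ and invoking \rlemma{ej-Aq}.

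First I would fix $n\in\N$, set $j=n\bmod q$, and use \eqref{eq.pathlpq}: the letters $a\in\Ap$ such that $\cod{n}.a\in\Lpq$ are exactly those for which there exists $m\in\N$ with $a=q\xmd m-p\xmd n$. Since $p\xmd n\geq 0$ and $a\geq 0$, the constraint $m\geq 0$ is automatically satisfied, so the set of such letters is precisely
\[
   \{\,a\in\Ap \,\mid\, a\equiv -p\xmd n \pmod{q}\,\}.
\]
Because $n\equiv j\pmod q$, this condition rewrites as $a\equiv -j\xmd p\pmod q$. The smallest non-negative integer in this residue class is $(-j\xmd p)\bmod q$, which lies in $\intint{0}{\qmu}$ and hence \emph{a fortiori} in $\Ap$ (since $q<p$); it is therefore the smallest letter of $\Ap$ satisfying the condition.

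It remains to recognise this value as $e_j$. From the defining equation~\eqref{eq.gap-def},
\[
  e_j \e=\e q\left(\sum_{i=0}^{j-1} r_i\right)-j\xmd p \e\equiv\e -j\xmd p\pmod q,
\]
and \rlemma{ej-Aq} asserts that $e_j\in\intint{0}{\qmu}$. Since there is a unique representative of the class of $-j\xmd p$ modulo $q$ in $\intint{0}{\qmu}$, we conclude $e_j=(-j\xmd p)\bmod q$, which is exactly the smallest valid label. No obstacle is expected: the argument is a direct translation between the path condition~\eqref{eq.pathlpq} and the arithmetic content of~\eqref{eq.gap-def}, the only non-trivial ingredient being the bound $e_j<q$ provided by \rlemma{ej-Aq}.
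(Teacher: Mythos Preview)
Your argument is correct and is essentially the same as the paper's: both use \eqref{eq.pathlpq} to identify the outgoing labels of~$n$ as the residue class of~$-j\xmd p$ modulo~$q$ inside~$\Ap$, invoke \eqref{eq.gap-def} to see that~$e_j$ lies in this class, and appeal to \rlemma{ej-Aq} to conclude that~$e_j$ is the unique (hence smallest) representative in~$\intint{0}{\qmu}$. The only difference is order of presentation: you first name the smallest element as~$(-j\xmd p)\bmod q$ and then match it to~$e_j$, whereas the paper first checks that~$e_j$ is a valid label and then remarks that membership in~$A_q$ forces minimality.
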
 

\begin{proof} 
  Let us denote by~$n$ an integer and by~$j$ its congruency modulo~$q$.
Since $e_{j}$ is in \Aq* (from \rlemma{ej-Aq}), it is enough to prove 
    that~$e_{j}$ is an outgoing label of~$n$, or (from \requation{pathlpq})
    that~$n\xmd p + e_{j}$ is a multiple of~$q$; or, equivalently
    that~$j\xmd p + e_{j}$ is a multiple of~$q$.
From \requation{gap-def},~$j\xmd p + e_j = \left(\xmd q\xmd \sum_{i=0}^{\ijmu} r_i \xmd \right)$,
  that is, a multiple of~$q$.
\end{proof}

To complete the proof of \rtheorem{lpq-rhythm-perm}, it remains to prove that
for every integer~$n$, the last digit of~$\pqRep{n}$ is~$(q\xmd n) \mod{p}$,
which directly results from the definition of the modified Euclidean 
division algorithm (\requation{mod-euc-div}).
% The proof of \rtheorem{lpq-rhythm-perm} relies mostly on the following statement
%   itself being a consequence of the technical \rlemma{smallest}.
%   
% % The next proposition constitute the first part of \rtheorem{lpq-rhythm-perm},
% %   and is proven in the appendix.
% \begin{proposition}\lproposition{rhythm}
%   For every integer $n>0$ (resp.~${n=0}$), there is exacly~$r_{(n\imod{q})}$ 
%     (resp.~$(r_0-1)$) letters a of \Ap* such that~$\cod{n}.a$ is in \Lpq*.
% \end{proposition}
% 
% 
% \begin{lemma}\llemma{ej-smallabel}
%   Given a base \pq* and an integer~$n$, the smallest letter~$a$ of \Ap* such that 
%     $\cod{n}.a$ is in \Lpq* is~$e_{(n\mod{q})}$.
% \end{lemma} 
% 
% To achieve the proof of \rtheorem{lpq-rhythm-perm}, it remains to prove that
%   for every integer~$n$, the last letter of~$\cod{n}$ is~$(n\xmd q) \mod{p}$,
%   which directly results from the definition of the modified Euclidean 
%   division algorithm 
%     (\requation{euclidean}).
% 

%%%%%%%%%%%%%%%

\section{Reduction to Rational Base Numeration Systems}
\lsection{red-to-rat}%

In this section, $p$ and~$q$ are two integers, $p>q\jsgeq 1$, not 
necessarily coprime, and~$\rhth$ is a rhythm of directing 
parameter~$(q,p)$.
As in \defin{rhythm-growth}, we denote by~$p'$ and~$q'$ their 
respective quotient by their gcd, that is, 
$\pqps$ is the reduced fraction of~$\pqs$.
% and~$p'$ and~$q'$ are coprime.

\begin{definition}\ldefinition{lr}
We denote by~$\Lr$ the language generated by a rhythm~$\rhth$
and the associated special labelling~$\labgr$.
\end{definition}

If~$\rhth$ happens to be a Christoffel rhythm, 
then~$\Lr$ is by definition equal to~$\Lpq$ (which, in this case, is 
the same as~$\Lpqp$).
The key result of this work states that~$\Lr$ is indeed of the same 
kind as~$\Lpqp$.

\begin{theorem}
\ltheorem{lr-num-pos}%
Let~$\rhth$ be a rhythm of directing parameter~$(q,p)$ and~$\pqps$
the reduced fraction of~$\pqs$.
Then, the language~$\Lr$ is a set of representations of the integers 
in the rational base~$\pqps$.
\end{theorem}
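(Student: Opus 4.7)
The plan is to show that the breadth-first numbering of $\Trr$ coincides with the evaluation of the corresponding path labels under $\pi_{\pqps}$. For each node $n \in \N$ of $\Trr$, let $u_n$ be the word labelling the path from the root to $n$ under the special labelling $\labgr$, and set $v(n) = \pi_{\pqps}(u_n)$. I aim to prove $v(n) = n$ for every $n$. Since $n \mapsto u_n$ is injective (distinct nodes of a tree have distinct paths from the root) and its image is, by definition, $\Lr$, the equality $v(n) = n$ gives that $\pi_{\pqps}$ restricted to $\Lr$ is a bijection onto $\N$, which is exactly the statement that $\Lr$ is a set of representations of the integers in the rational base $\pqps$.

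Because $\pi_{\pqps}(u \cdot a) = \pqps\,\pi_{\pqps}(u) + a/q'$ and $v(0) = \pi_{\pqps}(\varepsilon) = 0$, an induction on $n$ reduces the whole proof to establishing the single arc identity
\[ a \e=\e q'\,n - p'\,n' \qquad\text{for every arc } n' \pathx{a} n \text{ of } \Trr. \]
This is exactly the relation (\ref{eq.pathlpq}) characterising arcs in the tree of $L_{\pqps}$, so the same identity simultaneously tells us that the tree underlying $\Lr$ is a subtree of the tree of $L_{\pqps}$ (which is what makes the representations \emph{unique}).

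I prove the arc identity in two steps. First, a direct computation handles the arcs out of the nodes $j = 0,1,\ldots,q-1$ of the i-tree. Writing $s_j = r_0 + \cdots + r_{j-1}$, the children of node $j$ are exactly $s_j,\,s_j+1,\,\ldots,\,s_{j+1}-1$, all strictly less than $p$, so the label of the arc $j \to s_j + k$ is literally $\gamma_{s_j+k}$. Since $I_{\rhth} = \{s_1, \ldots, s_{q-1}\}$, unfolding the recurrence of \defin{spe-lab} across one block of length $r_{j-1}$ gives an increment of $(r_{j-1}-1)\,q' + (q' - p') = r_{j-1}\, q' - p'$, so by induction on $j$ I obtain $\gamma_{s_j} = q'\,s_j - p'\,j$ and hence $\gamma_{s_j+k} = q'(s_j+k) - p'\,j$, which is precisely the desired identity. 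Second, the identity propagates to all of $\Trr$ using \lemme{growth}: if $n' \pathx{a} n$ holds, then $(n'+q) \pathx{a} (n+p)$ holds with the same label (because the labelling only depends on $m \bmod p$), and it satisfies the same relation since $q'(n+p) - p'(n'+q) = (q'n - p'n') + (q'p - p'q) = q'n - p'n'$, the cancellation $q'p = p'q$ being forced by $\pqs = \pqps$.

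I do not anticipate a serious obstacle: the cancellation $q'p = p'q$ and the telescoping of the increments $+q'$ and $+q'{-}p'$ that define $\labgr$ are visibly tailored to fit together, which is really the point of the definition of the special labelling. The only subtlety is to remember that the clean arithmetic lives on the \emph{i-tree}: the self-loop $0 \pathx{0} 0$ corresponds to the empty word being the $\pqps$-representation of $0$, and deleting it to pass from the i-tree back to $\Trr$ is exactly what aligns $\Lr$ with the set of $\pqps$-representations of $\N$.
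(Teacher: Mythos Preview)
Your proposal is correct and follows the same two-step structure as the paper: first establish the arc identity $a = q'm - p'n'$ for every arc $n'\pathx{a}m$ of $\Trr$ (this is exactly \lemme{a-lr-num-pos}), then deduce $\evalpqp{\rRep{n}} = n$ by induction along the path (this is \propo{lr-num-pos}). The only minor variation is in the proof of the arc identity: the paper runs a single induction on~$m$, stepping from the $m$-th arc to the $(m{+}1)$-th and invoking the recursive definition of~$\labgr$ in each of the two cases, whereas you compute $\gamma_{s_j+k}=q'(s_j+k)-p'j$ in closed form for the first period and then propagate by \lemme{growth} together with the $p$-periodicity of the labelling---both arguments are equally short and rest on the same cancellation $q'p=p'q$.
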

Even though~$p$ and~$q$ are not coprime, the arcs of the 
tree \Lr* satisfies essentially the same equation as~$\Lpqp$ 
(\cf \requation{pathlpq}) as expressed by the following statement.
 
\begin{lemma}
\llemma{a-lr-num-pos}%
Let~$\rhth$ be a rhythm of directing parameter~$(q,p)$ and~$\pqps$
the reduced fraction of~$\pqs$.
Then, for every integers~$n$ and~$m$ it holds: 
\begin{equation}
n\pathaut{a}{\Lr}m
\e\Longrightarrow\e
a = q'\xmd m - p'\xmd n
\eqpnt
\notag
\end{equation}
\end{lemma}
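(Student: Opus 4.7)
The plan is to reduce the statement to a finite base case by exploiting the periodicity of both the tree $\Trr$ and the special labelling $\labgr$, and then to verify the base case by induction on $m$ using the defining recurrence of $\labgr$.

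First I would invoke \lemme{growth}: $n \pathaut{}{\Trr} m$ iff $(n+q) \pathaut{}{\Trr} (m+p)$. Moreover, by the definition of the labelling from $(\rhth, \labgr)$, the label on the arc from $n$ to $m$ is $\gamma_{m \mod p}$, which is equal to the label of the arc from $n+q$ to $m+p$. The target quantity $q'\xmd m - p'\xmd n$ is also invariant under the shift $(n, m) \mapsto (n+q, m+p)$, since $q'\xmd p = p'\xmd q$ by the definition of $\pqps$ as the reduced form of $\pqs$. Hence the statement for an arbitrary $(n, m)$ reduces to the case $0 \leq n \leq q - 1$, and in this range $m$ lies in $[0, p-1]$, so the relevant label is simply $\gamma_m$.

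For the base case, I would argue by induction on $m$ that $\gamma_m = q'\xmd m - p'\xmd n$ whenever $m$ is a child of $n$. The children of node $j$ form the interval $[S_{j-1}, S_j - 1]$ with $S_k = r_0 + r_1 + \cdots + r_k$ and the convention $S_{-1} = 0$. If $m$ is not the first child of its father $n$, then $m - 1$ is another child of $n$, the index $m$ does not lie in $I_{\rhth}$, and \defin{spe-lab} gives $\gamma_m = \gamma_{m-1} + q'$; combined with the induction hypothesis this yields the claim. If $m$ is the first child of $n$, then $m = S_{n-1}$, and the predecessor $m - 1$ is the last child of the largest $n' < n$ with $r_{n'} > 0$; the intervening $r_{n'+1}, \ldots, r_{n-1}$ all vanish, so $S_{n'} = S_{n'+1} = \cdots = S_{n-1} = m$, and the contribution of the index $m$ to the recurrence of $\labgr$ is a drop of $p'$ for each of these $n - n'$ collisions, giving $\gamma_m = \gamma_{m-1} + q' - (n - n')\xmd p' = q'\xmd m - p'\xmd n$.

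The main obstacle is the bookkeeping in this second case, which requires interpreting the defining recurrence for $\labgr$ with the correct multiplicity whenever several partial sums $S_k$ coincide; once this is settled, everything else is a direct unfolding of definitions, and the global statement follows from the base case by the periodicity furnished by \lemme{growth}.
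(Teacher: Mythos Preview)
Your argument is correct, and its inductive core---stepping from $\gamma_{m-1}$ to $\gamma_m$ via the defining recurrence of~$\labgr$---is exactly what the paper does. The difference is purely organisational: you first invoke \lemme{growth} to reduce to the finite window $0\le n<q$, $0\le m<p$, and only then carry out the induction, whereas the paper runs the very same induction directly over all of~$\N$, exploiting the fact that the label on the arc into~$m$ is $\gamma_{m\bmod p}$ so that the recurrence applies verbatim at every step without any preliminary reduction. Your route is a little longer but makes the role of periodicity explicit; the paper's is more compact. One point in your favour: you confront head-on the case where some~$r_i$ vanish and several partial sums coincide, which the paper's proof silently sidesteps by asserting that ``the $(m{+}1)$-th arc is either from~$n$ or from~$n{+}1$''---a dichotomy that fails precisely when~$r_{(n+1)\bmod q}=0$. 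Your multiplicity reading of~$I_{\rhth}$ is indeed what is required for the lemma to survive that degenerate situation.
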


\begin{proof}
By induction on~$m$. 
The implication obviously holds for the first arc of the tree~$\Lr$ 
as it is~$0\pathaut{q'}{\Lr}1$.

Let us assume it holds for the~$m$-th arc, that is,
$\msp n\pathaut{a}{\Lr}m\msp$ with~$a = q' m -p'n$.
The~$(m+1)$-th arc is either 
$\msp n\pathaut{b}{\Lr} (m+1)\msp$
or
$\msp(n+1) \pathaut{b}{\Lr} (m+1)\msp$.

\begin{itemize}
    
\item[$\bullet$] $\msp n\pathaut{b}{\Lr} (m+1)\msp$ 
corresponds to the case where 
$\msp\displaystyle{(m+1)\mod{p} < \sum_{i=0}^{n\imod q} r_i}\msp$, 
hence
\begin{equation}
b = \gamma_{((m+1)\imod{p})} = \gamma_{(m\imod{p})} + q' 
                             = q' (m+1) - p'n
\eqpnt
        \notag
\end{equation}
        
\item[$\bullet$] $\msp(n+1) \pathaut{b}{\Lr} (m+1)\msp$
corresponds to the case where
$\msp\displaystyle{(m+1)\mod{p} = \sum_{i=0}^{n\imod q} r_i}\msp$, 
hence
\begin{equation}
b = \gamma_{((m+1)\imod{p})} = \gamma_{(m\imod{p})} + q'-p' 
                             = q'(m+1) - p'(n+1)
\eqpnt 
        \notag
\end{equation}
\end{itemize}
In both cases, the second equality follows from the definition of the 
special labelling~$\labgr$.
\end{proof}

If we call $\rhth$-representation of an integer~$n$, and denote 
by~$\rRep{n}$, the word that labels the path from the root~$0$ to the 
node~$n$ in the labelled tree defined by~$\Lr$, \theor{lr-num-pos} is 
equivalent to the following statement that is established by induction 
on the length of the $\rhth$-representation of~$n$.

\begin{proposition}
\lproposition{lr-num-pos}%
Let~$\rhth$ be a rhythm of directing parameter~$(q,p)$,~$\pqps$
the reduced fraction of~$\pqs$ and~$\fvalpqp$ the evaluation 
function in the numeration system with rational base~$\pqps$.
Then, for every integer~$n$ it holds: 
\begin{equation}
\lequation{lr-num-pos}
\evalpqp{\rRep{n}}=n
\eqpnt
\notag
\end{equation}
\end{proposition}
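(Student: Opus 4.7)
The plan is to argue by induction on $n$, equivalently on the length of $\rRep{n}$, using \rlemma{a-lr-num-pos} as the workhorse. The statement of \propo{lr-num-pos} is exactly the kind of claim that unfolds from the recursive shape of $\fvalpqp$ once one knows what each digit along the unique path to~$n$ must be, so the content has already been packed into \rlemma{a-lr-num-pos}; what remains is a mechanical verification.

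For the base case, $n=0$ has $\rRep{0}=\epsilon$, and by the convention stated in \rsection{prelim}, $\evalpqp{\epsilon}=0$. For the inductive step, I would take $n>0$, let $n'$ be the father of $n$ in the tree~$\Lr$, and write $\rRep{n}=u\cdot a$ where $u=\rRep{n'}$ and $n'\pathaut{a}{\Lr}n$. By the inductive hypothesis, $\evalpqp{u}=n'$. From the definition of $\pi_{\pqps}$ in \equat{pi}, appending a digit $a$ on the right of a word $u$ yields the relation
\begin{equation}
\evalpqp{u\cdot a} \e = \e \pqps\cdot\evalpqp{u}+\frac{a}{q'}\eqpnt
\notag
\end{equation}
Substituting $\evalpqp{u}=n'$ and using \rlemma{a-lr-num-pos}, which gives $a=q' m - p' n'$ with $m=n$, we obtain
\begin{equation}
\evalpqp{u\cdot a} \e = \e \frac{p'}{q'}\xmd n' + \frac{q' n - p' n'}{q'} \e = \e n
\eqvrg
\notag
\end{equation}
which closes the induction.

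There is no real obstacle here: the only mildly delicate point is to write down the correct recursion for $\fvalpqp$ under concatenation with a digit (which is immediate from the defining sum), and to make sure the base case is handled by the empty-word convention. All the work has been done in \rlemma{a-lr-num-pos}, whose proof already performed the case analysis on whether the $(m+1)$-th arc stays at the father~$n$ or jumps to~$n+1$, and in each case produced exactly the coefficients $q'$ and $-p'$ needed to make this telescoping induction go through. Once \propo{lr-num-pos} is established, \theor{lr-num-pos} follows directly since every $\rRep{n}$ is then a $\pqps$-representation of~$n$.
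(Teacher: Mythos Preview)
Your proof is correct and follows essentially the same route as the paper: induction on the length of~$\rRep{n}$ (equivalently on~$n$), with the empty word as base case, peeling off the last digit, applying the induction hypothesis to the father, and invoking \rlemma{a-lr-num-pos} to identify the digit as~$q'n-p'n'$. The only cosmetic difference is that the paper writes out the full sum~$\sum_{i} \frac{a_i}{q'}(\pqps)^i$ and manipulates it directly, whereas you use the one-step recursion~$\evalpqp{u\cdot a}=\pqps\cdot\evalpqp{u}+\frac{a}{q'}$; these are of course the same computation.
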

\begin{proof}
Let
$\msp \rRep{n} = a_k a_{k-1}\cdots\xmd a_0\msp$ be the 
$\rhth$-representation of~$n$. 
We then want to prove that
\begin{equation}
n = \sum_{i=0}^{k} \frac{a_i}{q'} \xmd \left(\pqps\right)^i
\eqpnt
\notag
\end{equation}

By induction on the length of~$\rRep{n}$. 
The equality obviously holds true for~$\rRep{0} = \epsilon$.
  
Let~$m$ be an integer and~$\rRep{m}=a_{k+1}\xmd a_{k}\xmd a_{k-1}\cdots\xmd a_1\xmd a_0$ 
its $\rhth$-representation, that is, a word of~$\Lr$.
The word~$a_{k+1}\xmd a_k\xmd a_{k-1}\cdots\xmd a_1$ is also 
in~$\Lr$; it is the $\rhth$-representation of an integer~$n$ strictly smaller than~$m$,
and such that:
\begin{equation}
n\pathaut{a_0}{\Lr}m
\eqpnt
\notag
\end{equation}
By induction hypothesis, 
$\msp \displaystyle{n=\sum_{i=1}^{k+1} 
           \frac{a_i}{q}\xmd\left(\pqps\right)^{\iimu}}\msp$.
It follows from \lemme{a-lr-num-pos} 
that~$\msp a_0  = q'm - p'n\msp$, or, equivalently, 
that~$\msp\displaystyle{m=\frac{np'+a_0}{q'}}\msp$, 
hence
\begin{equation}
m \e=\e \pqps \left(\sum_{i=1}^{k+1} \frac{a_i}{q'} \xmd 
                               \left(\pqps\right)^{i-1}\right)
        + \frac{a_0}{q}
  \e=\e  \left(\sum_{i=1}^{k+1} \frac{a_i}{q'} \xmd 
                               \left(\pqps\right)^{i}\right)
        + \frac{a_0}{q}
\eqpnt
\notag
\end{equation}
\end{proof}

In other words, $\Lr$ seen as an abstract numeration system is indeed 
a \emph{positional} numeration system.

\medskip 

It has been shown in~\cite{AkiyEtAl08} that every numeration system 
in rational base~$\pqs$ has the remarkable property that even though 
the representation language~$\Lpq$ is not a regular language, the 
\emph{conversion} from any digit-alphabet~$B$ into the canonical 
alphabet~$\Ap$ is realised by a \emph{finite transducer} (indeed a 
letter-to-letter right sequential transducer), exactly as in the case 
of the numeration system in base~$p$ (\cf also~\cite{FrouSaka10hb}).

More precisely, let~$\Br$ be the digit-alphabet of the special 
labelling~$\labgr$.
Let~$\chi_{\rhth}$ be the function from~$\Brs$ into~$\Apps$ which 
maps every word of~$\Brs$ onto the word of~$\Apps$ which has the same 
value in the numeration system in the base~$\pqps$, that is, 
\begin{equation}
	\fa w \in\Brs \quantsp
\evalpqp{w} = \evalpqp{\chi_{\rhth}(w)}
\eqpnt
\notag
\end{equation}
Hence
$\msp\chi_{\rhth}(\Lr)=\Lpqp\msp$.
And we can then state:

\begin{theorem}[\cite{AkiyEtAl08}]
	\ltheorem{converter}%
The map~$\chi_{\rhth}$ is a rational function.
\end{theorem}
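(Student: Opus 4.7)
The statement is essentially an invocation of the alphabet conversion result established in \cite{AkiyEtAl08}; the plan is to verify that the present setting fits the hypotheses of that theorem. The key point is that $\Br$ and $\App$ are both finite alphabets of integers (with $\Br$ possibly containing negative entries, as shown by the example following \rdefinition{spe-lab}), and that by \rproposition{lr-num-pos} words over either alphabet have a well-defined value via the evaluation map $\fvalpqp$. Moreover, the construction of the special labelling ensures that $\gamma_i \equiv i\xmd q' \pmod{p'}$, so $\Br$ and $\App$ meet every residue class modulo~$p'$ in the same way; this is the natural compatibility needed for a digit-by-digit conversion.

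Under these conditions the converter of \cite{AkiyEtAl08} applies directly. Concretely, it is a right-sequential letter-to-letter transducer that reads its input from right to left and whose state stores a carry~$c$. On reading a digit $b \in \Br$, the transducer outputs the unique digit of $\App$ congruent to $(b+c)$ modulo~$p'$ and updates the carry according to the modified Euclidean division of \requation{mod-euc-div}, adapted to base~$\pqps$. Correctness --- preservation of the value under $\fvalpqp$ --- is then immediate by induction on the length of the input, using the algebraic identity underlying \requation{mod-euc-div}.

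The core difficulty, and the substance of the cited theorem, is to show that the set of reachable carries is finite so that the transducer has finitely many states. This is established by a fixed-point argument: the carry update is a strict contraction (with multiplicative factor $q'/p' < 1$, since $\pqs > 1$), perturbed by a bounded additive term depending only on the extremal values of $\Br \cup \App$. Starting from carry $0$, the orbit of carries is therefore trapped in a bounded interval, and an easy divisibility check confines it to a finite set of integer multiples of $q'$. This boundedness of the carries is the main obstacle; its resolution is precisely what the converter theorem of \cite{AkiyEtAl08} provides, and once it is granted the proof is complete.
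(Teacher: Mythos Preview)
Your proposal is correct and matches the paper's treatment: the paper provides no proof whatsoever for this theorem, simply citing \cite{AkiyEtAl08} and moving on, so your recognition that the statement is an invocation of the alphabet-conversion result from that reference is exactly right. Your additional sketch of the right-sequential carry transducer and the contraction argument for boundedness of the carry set goes beyond what the paper does, but it is an accurate summary of the construction in \cite{AkiyEtAl08} (see also \cite{FrouSaka10hb}).
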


We can now complete the proof of \theor{trans-language}.b.
Since~$\Klgr$ is prefix-closed, it is a FLIP language if and only if 
it contains no infinite regular subset.
Suppose that~$\Klgr$ contains an infinite regular subset~$R$.
There exists a morphism~$\phi_{\rhth}$ 
such that~$\phi_{\rhth}(\Klgr)=\Lr$ 
and a rational function~$\chi_{\rhth}$
such that~$\chi_{\rhth}(\Lr)=\Lpqp$;
hence the FLIP language~$\Lpqp$ contains the infinite regular 
subset~$\chi_{\rhth}(\phi_{\rhth}(R))$, a contradiction.
  
%%%%%%%%%%%%%%%

\section{Conclusion and future work}

With this notion of labelled signature, we have somehow captured the 
`regularity' of the representation languages in rational bases by 
showing that they have \emph{periodic} labelled signatures and that 
this periodicity is to some extent characteristic of these languages.
A by-product of this characterisation is the remarkable fact that 
periodic labelled signatures yield either very simple languages (when 
the growth ration is an integer) or vey complex (when it is not).

It would be very tempting to get the same kind of results with 
periodic signatures only, that is, without bringing labelling into 
play.
On one hand, it is easy to show that one gets a rational tree (that 
is, a tree with finite distinct subtrees) in the case of an 
integral growth ratio.
But on the other hand, and as it is related to open problems in number 
theory,  it would be certainly difficult to show, for instance, that 
all subtrees are distinct in the case of a non-integral growth ratio, 
although it is a reasonable conjecture.
Hopefully, there are easier problems at hand.

For sake of simplicity, we have considered here purely periodic 
signatures and labelled signatures only.
The generalisation to ultimately periodic ones raises no special 
difficulties but technical details to be settled.
And the results established here readily extend.

The problem of the representation of negative integers in the 
$\pqs$-numeration systems was considered (among others) 
in~\cite{FrouKlou12a}.
The characterisation of these $\pqs$-numeration systems by the 
corresponding Christoffel words and the study of their combinatorial 
properties allow a new approach to this problem and yield new proof 
to some results of~\cite{FrouKlou12a}; it is the purpose of 
forthcoming work of the first author~\cite{Marsxx}.

%%%%%%%%%%%%%%%

% 
%%%%%%%%%%%%%%%%%%%%%%%%%%%%%%%%%%%%%%%%%%%%%%%%%%%%%%%%%%%%%%%%% GENERAL INPUTS
\bibliographystyle{plain}
\small
\bibliography{%
  ../bibliography,%
  Alexandrie-abbrevs,%
  Alexandrie-AC,%
  Alexandrie-DF,%
  Alexandrie-GL,%
  Alexandrie-MR,%
  Alexandrie-SZ%
}
\normalfont
%%%%%%%%%
 
% \clearpage
% \input{RBR-appendix}
% \fi
%==================================================
\end{document}